\documentclass[lettersize,journal]{IEEEtran}
\newcommand{\bl}{\textcolor{black}}

\usepackage{amsmath,amsfonts,amssymb,amsthm}
\usepackage{algorithmic}
\usepackage{array}
\usepackage{subcaption}
\usepackage{graphicx}
\usepackage{multirow}
\usepackage{textcomp}
\usepackage{stfloats}
\usepackage{url}
\usepackage{verbatim}
\usepackage{graphicx}
\usepackage{color, soul}
\usepackage{bm}
\usepackage{hyperref}

\usepackage{cite}
\usepackage{xcolor}
\usepackage{xpatch}

\newtheorem{theorem}{Theorem}

\newtheorem{lemma}{\textbf{Lemma}}

\usepackage{algorithmic}
\usepackage[linesnumbered,ruled]{algorithm2e}
\usepackage{cuted}

\hyphenation{op-tical net-works semi-conduc-tor IEEE-Xplore}
\def\BibTeX{{\rm B\kern-.05em{\sc i\kern-.025em b}\kern-.08em
    T\kern-.1667em\lower.7ex\hbox{E}\kern-.125emX}}
\usepackage{balance}

\begin{document}
\title{Digital Twin-based 3D Map Management for Edge-assisted Device Pose Tracking in Mobile AR}

\author{    Conghao~Zhou,~\IEEEmembership{Member,~IEEE,}
            Jie~Gao,~\IEEEmembership{Senior~Member,~IEEE,}
            Mushu~Li,~\IEEEmembership{Member,~IEEE,}
            Nan~Cheng,~\IEEEmembership{Senior Member,~IEEE,}
            Xuemin~(Sherman)~Shen,~\IEEEmembership{Fellow,~IEEE,}
            and Weihua~Zhuang,~\IEEEmembership{Fellow,~IEEE}

            \thanks{This work was supported by a research grant from the Natural Sciences and Engineering Research Council (NSERC) of Canada.}

            \thanks{C.~Zhou, X.~Shen, and W.~Zhuang are with the Department of Electrical and Computer Engineering, University of Waterloo, Waterloo, ON, N2L 3G1, Canada, (e-mail:\{c89zhou, sshen, wzhuang\}@uwaterloo.ca).}

            \thanks{J.~Gao is with the School of Information Technology, Carleton University, Ottawa, ON, Canada K1S 5B6 (email:~jie.gao6@carleton.ca).}
            \thanks{M.~Li is with the Department of Electrical, Computer, and Biomedical Engineering, Toronto Metropolitan University, Toronto, ON, M5B 2K3, Canada (e-mail:~mushu1.li@torontomu.ca).}
            \thanks{N. Cheng is with the School of Telecommunications Engineering, Xidian University, Xi'an, 710071, P.R.China (e-mail:~nancheng@xidian.edu.cn).}
            \thanks{Part of this work was presented at IEEE/CIC ICCC 2023 \cite{zhou2023digital}.}
        }

\maketitle

\begin{abstract}

Edge-device collaboration has the potential to facilitate compute-intensive device pose tracking for resource-constrained mobile augmented reality (MAR) devices. In this paper, we devise a 3D map management scheme for edge-assisted MAR, wherein an edge server constructs and updates a 3D map of the physical environment by using the camera frames uploaded from an MAR device, to support local device pose tracking. Our objective is to minimize the uncertainty of device pose tracking by periodically selecting a proper set of uploaded camera frames and updating the 3D map. To cope with the dynamics of the uplink data rate and the user's pose, we formulate a Bayes-adaptive Markov decision process problem and propose a digital twin (DT)-based approach to solve the problem. First, a DT is designed as a data model to capture the time-varying uplink data rate, thereby supporting 3D map management. Second, utilizing extensive generated data provided by the DT, a model-based reinforcement learning algorithm is developed to manage the 3D map while adapting to these dynamics. Numerical results demonstrate that the designed DT outperforms Markov models in accurately capturing the time-varying uplink data rate, and our devised DT-based 3D map management scheme surpasses benchmark schemes in reducing device pose tracking uncertainty.

\end{abstract}

\begin{IEEEkeywords}
Edge-device collaboration, AR, 3D, digital twin, deep variational inference, model-based reinforcement learning. 
\end{IEEEkeywords}

\section{Introduction}

In the sixth-generation (6G) networks, immersive communications are anticipated to transcend the existing communication paradigm by offering users highly realistic and interactive experiences~\cite{shen2021holistic}. Augmented reality (AR), as a representative form of immersive communications, aims to seamlessly integrate virtual objects into the surrounding physical environments users, thereby enabling them to interact with virtual objects in a lifelike manner~\cite{zhang2023location}. Despite decades of development, AR has not been adopted in our daily lives on a large scale due to limitations such as device size~\cite{yang2019multi,zhang2022edge}. With rapid advancement in mobile devices, including smartphones and smart glasses, mobile AR (MAR) technology is expected to penetrate various fields in the 6G era, unlocking opportunities for a wide range of applications, such as immersive learning and tourism~\cite{shen2023toward,tan2020uav}. 

Tracking the time-varying pose of each MAR device is indispensable for MAR applications. Generally, to geometrically align the virtual objects with the physical environment within the field of view (FoV) of each MAR device in a 3D manner, the spatial relationship between the MAR device and the physical environment needs to be determined~\cite{ran2019sharear}. Nowadays, the real-time information on the required 3D spatial relationship can be provided by the simultaneous localization and mapping (SLAM) technique, which can be used to estimate the 3D position and orientation, jointly referred to as the \emph{3D device pose}, of an MAR device relative to the physical environment within its FoV~\cite{piao2019real}. As a result, SLAM-based 3D device pose tracking\footnote{``Device pose tracking'' is also called ``device localization'' in some works.} is anticipated to be a common module used by emerging MAR platforms, e.g.,~ARKit~\cite{linowes2017augmented} and ILLIXR~\cite{huzaifa2021illixr}, for supporting the development of various MAR applications. Despite the capability of SLAM in 3D alignment for MAR applications, limited resources hinder the widespread implementation of SLAM-based 3D device pose tracking on MAR devices. The primary limitation arises from the excessive resources demanded by SLAM or its variant techniques, beyond what are typically available on MAR devices~\cite{ben2022edge}. Specifically, to achieve accurate 3D device pose tracking, SLAM techniques need the support of a 3D map that consists of a large number of distinguishable landmarks in the physical environment. However, obtaining and maintaining such a 3D map for continuously updating previously device poses consume excessive storage and computing resources~\cite{siriwardhana2021survey}.

\bl{Cloud/edge-assisted} device pose tracking offers a promising solution to address the resource limitations of MAR devices by leveraging network resources~\cite{apicharttrisorn2020characterization,han2022intelli}. From cloud-computing-assisted tracking to the recently prevalent mobile-edge-computing-assisted tracking, researchers have explored resource-efficient approaches for network-assisted tracking from different perspectives. Research works in one category are from the perspective of device pose tracking primarily, which focus on refining SLAM system design to facilitate cloud/edge-device collaboration for MAR~\cite{ran2019sharear,chen2018marvel}. However, these research works tend to overlook the impact of network dynamics by assuming time-invariant communication resource availability or delay constraints. Meanwhile, studies in another category have delved into cloud/edge computing task offloading and scheduling from a networking perspective, considering dynamic service demand and resource availability~\cite{pan2023joint,du2023gradient,zhou2022digital}. Treating device pose tracking as a computing task, these approaches are apt to optimize networking-related performance metrics such as delay but do not capture the impact of computing task offloading and scheduling on the performance of device pose tracking. Consequently, despite considerable research efforts from both perspectives, network-assisted device pose tracking that \emph{natively} adapts to network dynamics with optimal device pose tracking performance remains a significant challenge for MAR. 

To fill the gap between the aforementioned two categories of research works, we investigate \emph{network dynamics-aware} 3D map management for network-assisted tracking in MAR. Specifically, we consider an edge-assisted SALM architecture, in which an MAR device conducts real-time device pose tracking locally and uploads the captured camera frames to an edge server. The edge server constructs and updates a 3D map using the uploaded camera frames to support the local device pose tracking. We optimize the performance of device pose tracking in MAR by managing the 3D map, which involves uploading camera frames and updating the 3D map. There are three key challenges to 3D map management \bl{for individual MAR devices}. First, an MAR device must select only a portion of the collected information, \bl{more specifically camera frames}, on its physical environment to update the 3D map for its 3D device pose tracking, given the computing and storage resource constraints at the edge server~\cite{campos2021orb}. Second, the camera frame uploading at an MAR device must adapt to the time-varying uplink data rate of the MAR device, which determines the maximum number of camera frames that can be uploaded per unit time for 3D map update~\cite{chen2023adaptslam}. Third, a new \bl{performance} metric different from tracking accuracy for evaluating the device pose tracking performance becomes necessary when the network perspective is integrated into 3D map management, due to the lack of ground truth for the real-time 3D pose of an MAR device in practice~\cite{huzaifa2021illixr}.

To address these challenges, we introduce a digital twin (DT)-based approach to effectively cope with the dynamics of the uplink data rate and the device pose. Building upon the DT architecture delineated in our previous work~\cite{shen2021holistic}, we establish a DT for an MAR device to create a data model that can infer the unknown dynamics of its uplink data rate. Subsequently, we propose an artificial intelligence (AI)-based method, which utilizes the data model provided by the DT to learn the optimal policy for 3D map management in the presence of device pose variations. The main contributions of this paper are as follows:
    \begin{itemize}
        \item We introduce a new performance metric, termed pose estimation uncertainty, to indicate the long-term impact of 3D map management on the performance of device pose tracking, which adapts conventional device pose tracking \bl{in MAR} to network dynamics.

        \item We establish a user DT (UDT), which leverages deep variational inference to extract the latent features underlying the dynamic uplink data rate. The UDT provides these latent features to simplify 3D map management and support the emulation of the 3D map management policy in different network environments.

        \item We develop an adaptive and data-efficient 3D map management algorithm featuring model-based reinforcement learning (MBRL). By leveraging the combination of real data from actual 3D map management and emulated data from the UDT, the algorithm can provide an adaptive 3D map management policy in highly dynamic network environments.

    \end{itemize}

The remainder of this paper is organized as follows. Section~II provides an overview of related works. Section~III describes the considered scenario and system models. Section~IV presents the problem formulation and transformation. Section~V introduces our UDT, followed by the proposed MBRL algorithm based on the UDT in Section~VI. Section~VII presents the simulation results, and Section~VIII concludes the paper.

\section{Related Works}

In this section, we first summarize existing works on edge/cloud-assisted device pose tracking from the MAR or SLAM system design perspective. Then, we present some related works on computing task offloading and scheduling from the networking perspective.

\subsection{Cloud/Edge-assisted Device Pose Tracking}

Existing studies on edge/cloud-assisted MAR applications can be classified based on their approaches to aligning virtual objects with physical environments. Specifically, there are image retrieval-based, deep learning-based, and localization-based approaches~\cite{chen2018marvel}.

The image retrieval-based approaches utilize a pre-constructed database comprising labeled images, deployed at a cloud/edge server~\cite{zhang2022sear}. Given a captured camera frame, an MAR device searches and retrieves the most similar labeled image from the database. Subsequently, the information from this retrieved labeled image is utilized to support the 3D alignment of virtual objects with this captured camera frame. Deep learning-based approaches in MAR can be viewed as an advancement over image retrieval-based approaches. To overcome the low efficiency of image retrieval-based approaches, deep learning-based approaches leverage deep neural networks (DNNs), e.g., convolutional neural networks, to find the most similar labeled image~\cite{liu2020collabar}. Both image retrieval-based and deep learning-based approaches are only suitable for lightweight MAR applications that do not need large databases~\cite{chen2018marvel}. Since a physical object can be viewed from various angles and distances, \bl{these} approaches require a large \bl{set of} distinct labels. In addition, the accuracy of both approaches in 3D alignment is limited for existing MAR applications~\cite{ran2020multi}. 

Currently, both industries and academia have shifted their focus towards localization-based approaches, e.g., Visual-SLAM~\cite{huzaifa2021illixr}. By establishing 3D maps for the physical environments, localization-based approaches can estimate the 3D poses of individual MAR devices with high accuracy. Instead of identifying physical objects based on their appearance, localization-based approaches can leverage location-related information of physical objects to facilitate accurate and resource-efficient 3D alignment. Chen~\emph{et~al.} utilize a cloud server to calibrate the localization of a local MAR device~\cite{chen2018marvel}. The MAR device uploads recent camera frames when there is a significant discrepancy between the localization result from the cloud server and that from the MAR device. Ben Ali~\emph{et~al.} build an edge-device collaboration system for Visual-SLAM, with 3D map management on the edge server and 3D pose estimation on the local MAR device. Following~\cite{ben2022edge}, the authors of~\cite{chen2023adaptslam} investigate the impact of radio resource constraints and introduce pose estimation uncertainty in edge-assisted Visual-SLAM. Extending edge-device collaboration to support multiple MAR users, the works in~\cite{ran2020multi,dhakal2022slam} focus on the coordinate synchronization to guarantee spatial consistency across different MAR devices. Ren~\emph{et al.} investigate the computing and communication resource allocation to support coordinate synchronization~\cite{ren2020edge}. Despite the existing efforts towards SLAM system design in cloud/edge-assisted MAR, the \bl{impact} of network dynamics on device pose tracking performance remains open.

We employ a Visual-SALM technique, as a localization-based approach, to enhance edge-assisted 3D pose tracking in MAR. Different from conventional localization-based approaches that often overlook network dynamics and assume 3D maps of unlimited size, we emphasize the long-term impact of network dynamics on 3D map management and propose a DT-based approach to adapt to the dynamics of the uplink data rate and the user's pose, while considering a limited-size 3D map given the resource constraints at the edge server.

\subsection{Computing Task Offloading and Scheduling}

By treating the tracking of the device pose for each camera frame as a computing task, the process of uploading camera frames and updating a 3D map is closely related to the computing task offloading and scheduling in networking~\cite{shen2023toward}. Depending on the chosen performance metrics, existing approaches to computing task offloading and scheduling differ significantly.

Many studies concentrate on improving the delay performance of computing task offloading in a specific network scenario, including space-air-ground integrated networks~\cite{cheng2019space,ji2023cooperative} and vehicular networks~\cite{hui2022collaboration}. Meanwhile, some researchers investigate computing task offloading or scheduling schemes for specific applications. Li~\emph{et al.} focus on virtual reality applications and aim to reduce the camera frame missing rate in dynamic network environments~\cite{li2023user}. Considering surveillance and search-and-rescue-related applications with unmanned aerial vehicles (UAVs), the authors in~\cite{luo2022deep} propose to improve the reliability of target search results by properly offloading the search tasks of UAVs according to UAV trajectories. With the advent of artificial intelligence (AI)-related applications, researchers have started to investigate computing task offloading or scheduling strategies to optimize the accuracy of AI-related applications. The studies in~\cite{pan2023joint} and~\cite{hu2023adaptive} focus on the inference accuracy of DNNs utilized for AR and Internet of Things, respectively. To facilitate federated learning, Du~\emph{et al.} propose a task scheduling scheme for distributed devices according to their data qualities and channel conditions~\cite{du2023gradient}.

Different from the aforementioned works on computing task offloading and scheduling, our approach incorporates device pose estimation uncertainty as a performance metric to evaluate camera frame uploading and 3D map update in MAR applications. Furthermore, we prioritize camera frames when updating the 3D map to accommodate user pose variations given the time-varying uplink data rate.

\section{System Model}

\subsection{Considered Scenario}

    \begin{figure}[t]
        \centering,
        \includegraphics[width=0.5\textwidth]{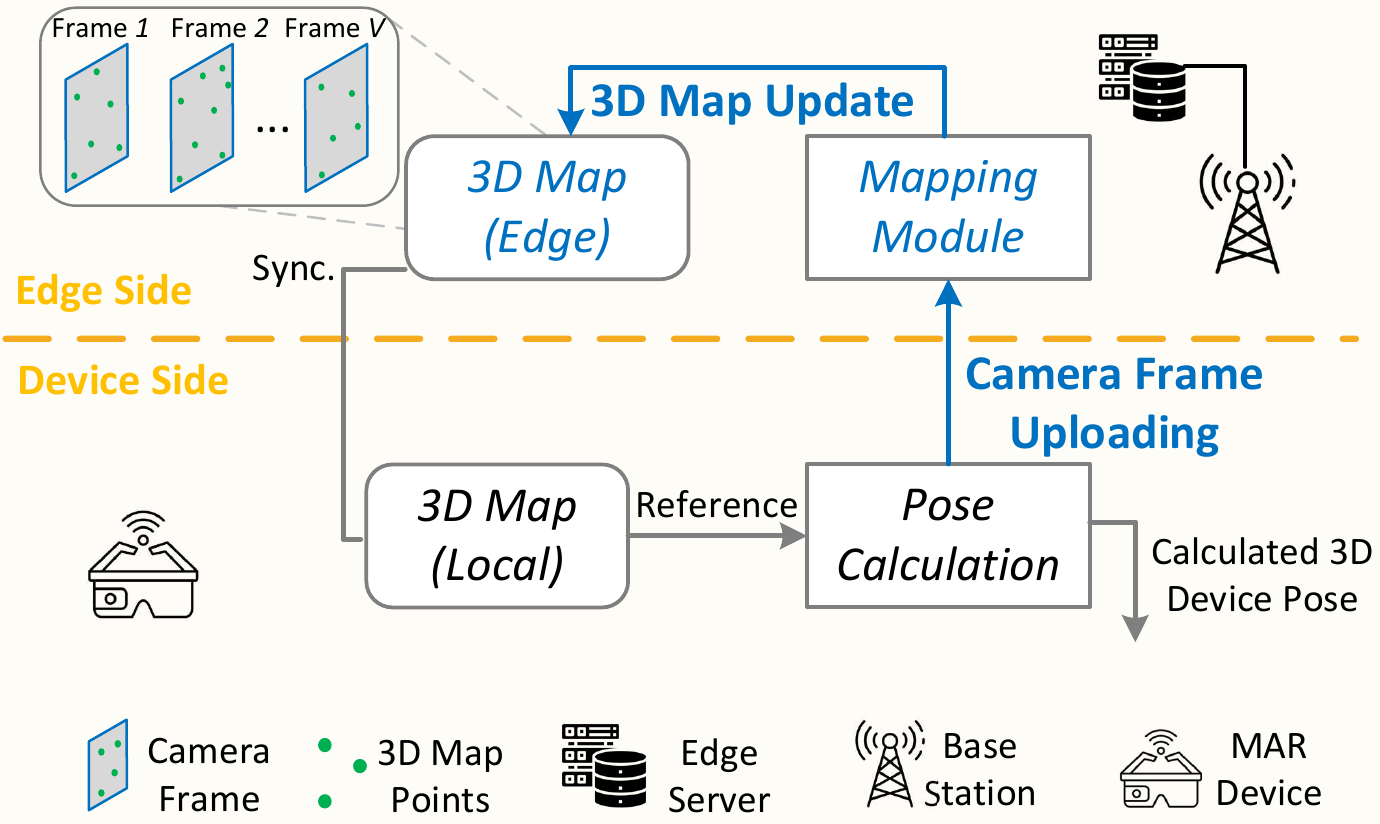}
        \caption{The considered scenario of edge-assisted MAR.}\label{system}
    \end{figure}

Let one MAR device run an MAR application. While the 3D pose (i.e., the position and orientation) of the MAR device changes over time, the physical environment (e.g., a living room) of the MAR device does not change~\cite{cozzolino2022nimbus}. To establish the spatial alignment between virtual objects from the MAR application and the physical environment, the MAR device needs to periodically capture camera frames for tracking its 3D pose as it moves while updating a 3D map of the physical environment. A \emph{3D map} \bl{consists of} a set of captured camera frames and the corresponding set of \bl{3D map points} in these camera frames. \bl{Each 3D map point is referred to as a feature point}, which corresponds to a distinctive spot or characteristic (e.g., a corner of wall) of the physical environment~\cite{linowes2017augmented}.

Generally, the device pose tracking for MAR applications comprises two modules: a lightweight pose calculation module for real-time 3D pose calculation and a resource-intensive mapping module for managing a 3D map of the physical environment~\cite{campos2021orb,mur2017orb}. To calculate the device pose corresponding to a particular camera frame, the feature points detected in this camera frame need to be matched with feature points in previously captured camera frames, which are stored in the 3D map. The 3D map management in the mapping module involves constructing and updating the 3D map as the reference for 3D pose calculation. 

Due to the limited computing capability and battery of the MAR device, we adopt an edge-device collaborative framework, \bl{as} shown in Fig.~\ref{system}, to support the MAR application. Specifically, an edge server at a base station (BS) is equipped with the mapping module for 3D map management, and the MAR device is equipped with the pose calculation module for local 3D pose calculation.

\subsection{Workflow of Edge-assisted Device Pose Tracking}

The general workflow of device pose tracking for edge-device collaborative MAR applications includes four steps: 
    \begin{enumerate}
        \item \emph{3D pose calculation}: The MAR device calculates its 3D pose corresponding to each camera frame by matching the 3D map points (i.e., feature points of the physical environment) detected in this camera frame with those contained in the local 3D map, shown as the ``3D Map (Local)'' block in Fig.~\ref{system};

        \item \emph{Camera frame uploading}: The MAR device uploads a subset of recently captured camera frames to the edge server depending on its available uplink communication resource;

        \item \emph{3D map update}: The edge server updates the 3D map, shown as~the ``3D Map (Edge)'' block in Fig.~\ref{system}, by processing the camera frames uploaded by the MAR device;

        \item \emph{Synchronization}: The edge server periodically sends the updated 3D map back to the MAR device as references to facilitate 3D pose calculation~\cite{ben2022edge}.
    \end{enumerate}

The mapping module at the edge server and the pose calculation module at the MAR device operate on two different time scales. Specifically, the 3D pose calculation (Step~1) is conducted for each camera frame and takes as short as several milliseconds to complete, while the 3D map management (Steps~2-4) generally operates on a larger time scale (e.g., over several seconds)~\cite{campos2021orb}. In this paper, we focus on~\emph{camera frame uploading} (Step 2) and \emph{3D map update} (Step 3) corresponding to the blue arrows in Fig.~\ref{system}, which are detailed in Subsections~\ref{sec22} and~\ref{sec23}, respectively. For brevity, the term ``3D map'' in the rest of the paper denotes ``the 3D map managed by the edge server'' unless otherwise stated.

    \begin{figure}[t]
        \centering
        \includegraphics[width=0.45\textwidth]{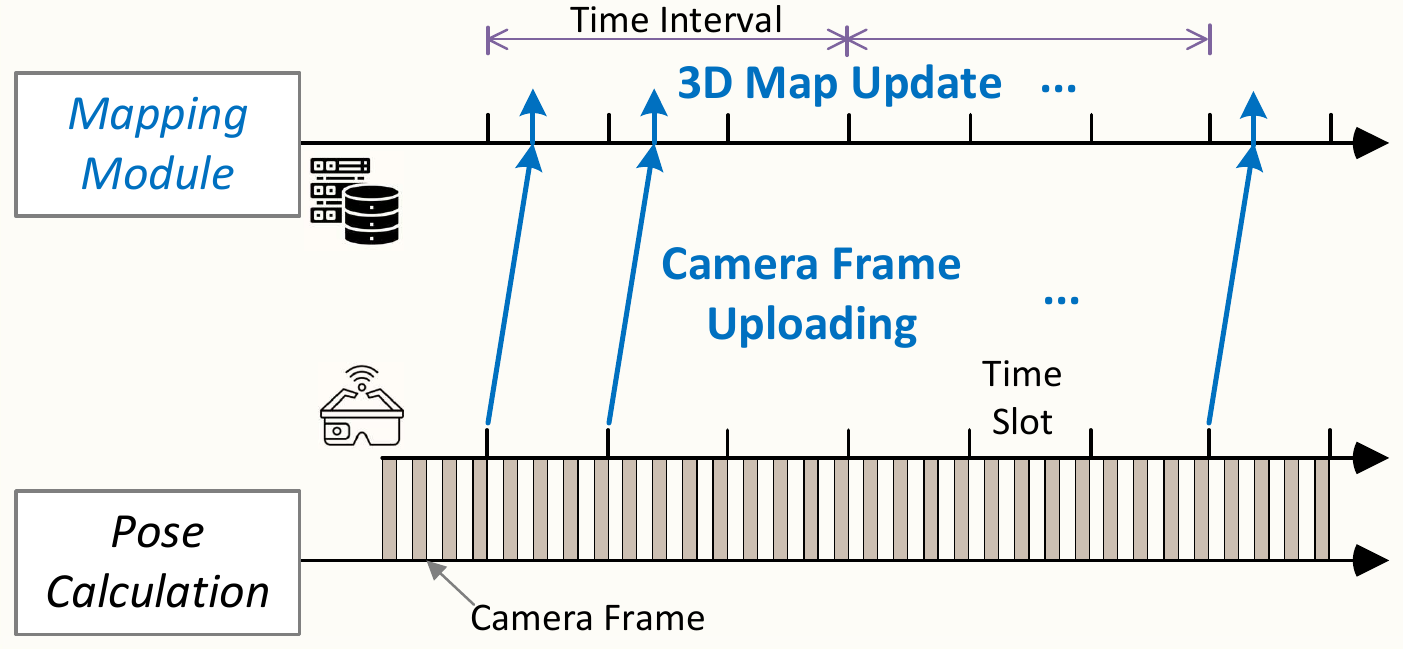}
        \caption{The timeline of 3D map management.}\label{timeline}
    \end{figure}

\subsection{3D Map Model}\label{sec21}

The edge server updates the 3D map per~$F$ camera frames, referred to as a time slot. Denote the set of time slots and the set of camera frames captured across all time slots by~$\mathcal{K}$ and~$\mathcal{F}$, respectively. We illustrate the corresponding timeline of 3D map management in Fig.~\ref{timeline}. Each camera frame, denoted by~$f \in \mathcal{F}$, contains a \bl{set of 3D map} points, denoted by~$\mathcal{M}_{f}$. 

We model the 3D map as a weighted undirected graph to capture the relationships among the camera frames forming the 3D map. The model for a 3D map including four camera frames is illustrated in Fig.~\ref{map}. Denote the 3D map at the beginning of time slot~$k \in \mathcal{K}$ by ~$\mathcal{G}^\text{e}_{k} = (\mathcal{V}^\text{e}_{k}, \mathcal{E}^\text{e}_{k})$, where $\mathcal{V}^\text{e}_{k} \subset \mathcal{F}$ denotes the set of camera frames contained in the 3D map at the beginning of time slot~$k$, and $\mathcal{E}^\text{e}_{k}$ denotes the set of relationships between every pair of camera frames in~$\mathcal{V}^\text{e}_{k}$. For edge~$e = (f, f') \in \mathcal{E}^\text{e}_{k}$ connecting frames~$f \in \mathcal{V}^\text{e}_{k}$ and~$f' \in \mathcal{V}^\text{e}_{k}$, we define its weight as follows:
  \begin{equation}\label{}
        w_{f, f'} = |\mathcal{M}_{f} \cap \mathcal{M}_{f^{'}}|, \,\, \forall f, f' \in \mathcal{V}^\text{e}_{k},
    \end{equation} 
where $|\cdot|$ represents the cardinality of a set, and $\cap$ denotes the intersection of two sets. If the sets of 3D map points contained in camera frames~$f$ and $f^\prime$ are similar, the weight of edge,~$w_{f, f'}$, will be large. As shown in Fig.~\ref{map}, the set of 3D map points corresponding to each camera frame is the collection of the corresponding green points, and the edges are depicted as the orange lines between camera frames.

    \begin{figure}[t]
        \centering
        \includegraphics[width=0.38\textwidth]{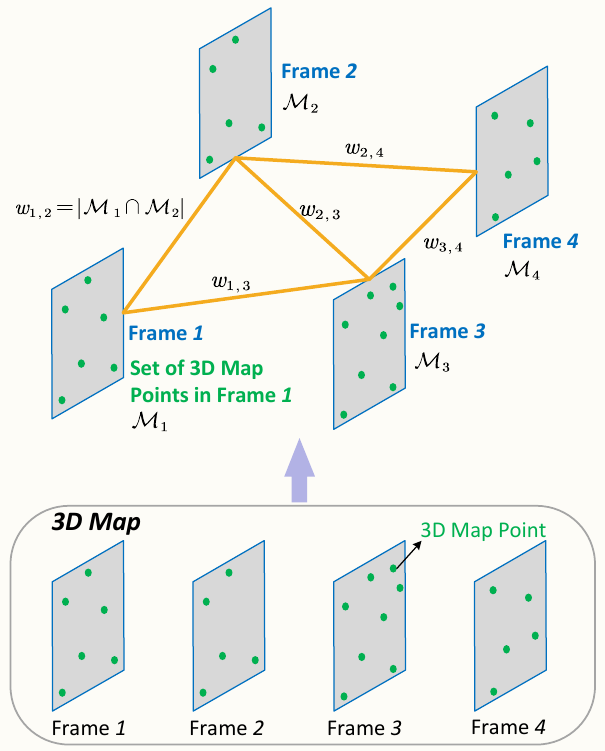}
        \caption{An illustration of the 3D map model.}\label{map}
    \end{figure}

\subsection{Camera Frame Uploading}\label{sec22}

As the MAR user moves around, the MAR device uploads its newly captured camera frames to the edge server for updating the 3D map. The uplink data rate varies over time due to time-varying communication resource availability or communication link quality~\cite{tan2020deep}, and the camera frame uploading must adapt to such variations. As shown in Fig.~\ref{timeline}, we introduce an additional time scale, named the time interval (over a few minutes) so that the dynamics of the uplink data rate is stationary within each time interval. Each time interval consists of~$K$ consecutive time slots, and the set of time slots within time interval~$t$ by~$ \mathcal{K}_{t} = \{ k | (t-1)K < k \leq tK, \forall k \in \mathcal{K} \}$. Within time interval~$t$, we denote the uplink data rate at time slot~$k \in \mathcal{K}_{t}$ by random variable~$d_{k}$, which follows an $N$-state Markov chain. The state transition matrix of the $N$-state Markov chain is assumed to be stationary within each time interval but can vary across time intervals. We introduce random variable~$x_{t}$ to represent the temporal variation of the state transition matrix of the Markov chain across time intervals, i.e., state transition probability $P(d_{k+1}|d_{k}, x_{t})$ is conditioned on $x_t$ and can vary across time intervals.

Without loss of generality, suppose that the MAR device uploads its newly captured camera frames within each time slot to the edge server at the end of the time slot, shown as the blue arrows between the edge server and the MAR device in Fig.~\ref{timeline}. Due to the bandwidth limit for uplink transmissions, a subset of camera frames captured at the end of each time slot needs to be selected for uploading. Denote the set of all camera frames captured during time slot~$k \in \mathcal{K}$ and the subset of camera frames selected for uploading by $\mathcal{F}_{k} \subseteq \mathcal{F}$ and~$\mathcal{U}_{k} \subseteq \mathcal{F}_{k}$, respectively. Let the amount of transmitted data (in bits) for uploading each camera frame, denoted by~$\alpha$, be identical for uploading each camera frame. In any time slot, camera frame uploading should satisfy the following constraint:
    \begin{equation}\label{eq2}
        \alpha  |\mathcal{U}_{k}| \leq d_{k} D^\text{req}, \,\, \forall k \in \mathcal{K},
    \end{equation}
where~$D^\text{req}$ denotes the maximum tolerable transmission delay for uploading the selected camera frames. The value of $D^\text{req}$ in~\eqref{eq2} can be set flexibly according to the overall performance requirement of device pose tracking in MAR.

\subsection{3D Map Update}\label{sec23}

Generally, the mapping module at the edge server involves updating the 3D map and solving a \bl{3D map optimization} problem~\cite{ben2022edge}. In this subsection, we model the 3D map update and the impact of the computing and storage resource limitation for solving a \bl{3D map optimization} problem on 3D map update.

\subsubsection{The impact of resource limitation for solving a \bl{3D map optimization} problem} Given a 3D map~$\mathcal{G}^\text{e}_{k}$, the mapping module is responsible for jointly estimating the 3D device poses corresponding to all camera frames~$f \in \mathcal{V}^\text{e}_{k}$ by solving a \bl{3D map optimization} problem~\cite{campos2021orb}.\footnote{Given a 3D map, solving a \bl{3D map optimization} problem requires finding the maximum likelihood estimations for the 3D device pose corresponding to each camera frame by comparing the feature points contained in every pair of two camera frames in the 3D map~\cite{campos2021orb}.} However, the computing and storage resources required for solving this \bl{3D map optimization} problem increases exponentially with the size of 3D map, while these resources are usually limited at the edge server for any individual MAR device. Considering the resource limitation, we denote the maximum size of the 3D map~by~$V^\text{max}$. Given the set of camera frames in the 3D map, i.e.,~$\mathcal{V}^\text{e}_{k}$, the size of 3D map should satisfy the following constraint:
    \begin{equation}\label{eq4}
        |\mathcal{V}^\text{e}_{k}| \leq V^\text{max}, \,\, \forall k \in \mathcal{K}.
    \end{equation} 
Meanwhile, due to the resource limitation, the 3D map cannot store all the camera frames ever uploaded~\cite{ran2019sharear, campos2021orb}, resulting in the need of removing some camera frames regularly to update the 3D map.

\subsubsection{3D map update} In each time slot, the set of camera frames stored in the 3D map is updated after the edge server receives newly uploaded camera frames.\footnote{The phrases ``update a 3D map'' and ``update the set of frames contained in a 3D map'' are used interchangeable in this paper.} Due to the limited size of the 3D map, a set of camera frames, denoted by $\mathcal{C}_{k} \subseteq \mathcal{V}^\text{e}_{k}$, are removed from the 3D map. Given the set of newly uploaded camera frames~$\mathcal{U}_{k}$, the set of camera frames in the 3D map in time slot~$k+1$, i.e.,~$\mathcal{V}^\text{e}_{k+1}$, evolves as follows: 
    \begin{equation}\label{eq3}
        \mathcal{V}^\text{e}_{k+1} = \left\{ \mathcal{U}_{k} \cup \mathcal{V}^\text{e}_{k} \right\} \backslash \mathcal{C}_{k}, \,\, \forall k, k+1 \in \mathcal{K}.
    \end{equation}
The evolution of set~$\mathcal{V}^\text{e}_{k}$ affects both the nodes and the edges of the graph representing the 3D map. Following the 3D map update, the edge server sends the 3D map back to the MAR device as references for supporting the local 3D pose calculation at the MAR device within the subsequent time slot. 

From Subsections~\ref{sec22} and~\ref{sec23}, it can be seen that 3D map management decisions in each time slot involve both~$\mathcal{U}_{k}$ for camera frame uploading and $\mathcal{C}_{k}$ for 3D map update.

\subsection{Pose Estimation Uncertainty}\label{sec24}

A \bl{performance} metric is required to measure the impact of 3D map management decisions on the performance of device pose tracking~\cite{pan2023joint}. The tracking accuracy is widely used as the metric to evaluate the performance of pose calculation in MAR~\cite{chen2018marvel,ben2022edge}. However, due to the lack of ground truth for 3D poses of an MAR device in real time, a different metric is needed for guiding real-time 3D map management. Recent studies have concentrated on the metric of pose estimation uncertainty, which captures how the quality of a 3D map affects the robustness of MAR or SLAM. For a given 3D map, the pose estimation uncertainty can be obtained in real time. A lower uncertainty represents a higher reliability of the 3D pose estimation, and in turn a higher pose calculation accuracy by reducing cumulative errors~\cite{khosoussi2019reliable}. Existing works have shown the suitability of this metric in facilitating SLAM, especially for tracking accuracy improvement~\cite{chen2023adaptslam}.
 
Since the pose estimation uncertainty affects the reliability of device pose tracking in MAR and can be calculated without requiring ground truth for 3D poses of an MAR device~\cite{rodriguez2018importance}, we adopt it as the performance metric to guide 3D map management. Generally, pose estimation uncertainty, which is unitless, characterizes the impact of the relationship among camera frames in a given 3D map on the error covariance of estimated 3D poses based on this 3D map~\cite{chen2021cramer}. Given 3D map~$\mathcal{G}^\text{e}_{k}$, the pose estimation uncertainty is calculated according to the node connectivity and edge weights in the 3D map, as follows~\cite{chen2023adaptslam,khosoussi2019reliable}:
    \begin{equation}\label{eq6}
        u(\mathcal{G}^\text{e}_{k}) = - \log \left( \det ( \hat{\bm{L}}(\mathcal{G}^\text{e}_{k}) \otimes \boldsymbol{\Pi}) \right), \forall k \in \mathcal{K},
    \end{equation}
where~$\hat{\bm{L}}(\mathcal{G}^\text{e}_{k})$ denotes the reduced Laplacian matrix of the graph representing the 3D map~$\mathcal{G}^\text{e}_{k}$~\cite{godsil2001algebraic},~$\otimes$ represents the Kronecker product, and~$\det(\cdot)$ denotes the determinant of a matrix~\cite{petersen2008matrix}. Matrix~$\boldsymbol{\Pi}$ in~\eqref{eq6} has a dimension of~$6\times6$ due to the six degrees of freedom (DoF) of a 3D pose, and the value of $\boldsymbol{\Pi}$ is related to the camera settings of the MAR device and can usually be assumed as a constant.

\section{Problem Formulation \& Transformation}

In this section, we first formulate a 3D map management problem with the objective of minimizing the pose estimation uncertainty. Then, we transform the problem into a Markov decision process (MDP) problem.

\subsection{Problem Formulation}

To capture the impact of the 3D map updated in time slot~$k$ on the device pose calculation for the camera frames captured in the subsequent time slot~$k+1$, we define the average pose estimation uncertainty over all camera frames in set~$\mathcal{F}_{k+1}$ as~$\upsilon_{k}$. Given the set of camera frames~$\mathcal{F}_{k+1}$ captured within time slot~$k+1$, the value of~$\upsilon_{k}$ is given by:
    \begin{equation}\label{eq8}
       \upsilon_{k} =  |\mathcal{F}_{k+1}|^{-1} \sum_{f \in \mathcal{F}_{k+1}}{u(\mathcal{G}^\text{e}_{k} \cup \{ f \})}, \forall k, k+1 \in \mathcal{K}, 
    \end{equation}
where
    \begin{equation}\label{}
        \mathcal{G}^\text{e}_{k} \cup \{ f \} := \left (\mathcal{V}^\text{e}_{k} \cup \{f\}, \mathcal{E}^\text{e}_{k} \cup \{e=(f,f')| f' \in \mathcal{V}^\text{e}_{k} \} \right ),
    \end{equation}
in which $\{e=(f,f')| f' \in \mathcal{V}^\text{e}_{k} \}$ denotes the set of newly generated edges due to adding camera frame~$f$ to 3D map~$\mathcal{G}^\text{e}_{k}$.

To minimize the pose estimation uncertainty over all time slots, we formulate the following optimization problem:  
    \begin{subequations}\label{p1}
        \begin{align}
            \textrm{P1:} \,\, & \min_{ \{ \mathcal{U}_{k}, \mathcal{C}_{k} \}_{k \in \mathcal{K}} } \sum_{ k \in \mathcal{K}}{ \upsilon_{k} }\\
            \textrm{s.t.} &\,\, \eqref{eq2}, \eqref{eq4}, \eqref{eq3}, \\
            & \,\, \mathcal{U}_{k} \subseteq\mathcal{F}_{k}, \,\, \forall k \in \mathcal{K},\\     
            & \,\, \mathcal{C}_{k} \subseteq \mathcal{V}^\text{e}_{k} \cup \mathcal{U}_{k}, \;\; \forall k \in \mathcal{K}.
        \end{align}
    \end{subequations}
The optimization variables in Problem~P1 are the set of selected camera frames for uploading, i.e.,~$\mathcal{U}_{k}$, and the set of camera frames removed from the original 3D map, i.e.,~$\mathcal{C}_{k}$, in each time slot. Solving Problem P1 is challenging due to three reasons. First, managing the 3D map for any given time slot is an NP-hard fixed-cardinality maximization problem~\cite{chen2023adaptslam}. Regular iterative methods cannot be applied to Problem~P1 due to the unknown~\emph{a priori} information on the set of camera frames captured within subsequent time slots. Second, 3D map management across multiple time slots results in a sequential decision-making problem. Decisions made for the 3D map in one time slot inevitably influence those in subsequent time slots. Making decisions for each time slot independently, without considering their cumulative effect, is not likely to yield the optimal long-term 3D map management. Third, the stochastic uplink data rate is non-stationary across multiple time intervals, subject to the influence of the random variable~$x_{t}$, which exacerbates the challenges.

\subsection{Problem Transformation}\label{sec42}

In this subsection, we first analyze the characteristics of pose estimation uncertainty to reduce the solution space for solving Problem~P1, and then transform the problem into an MDP problem.

\subsubsection{Cardinality of the Optimal Solution Set}

We present the following lemma to show that the pose estimation uncertainty decreases when the number of camera frames forming a 3D map increases.

    \begin{lemma}\label{lemma1}
        Given a connected 3D map~$\mathcal{G}=(\mathcal{V}, \mathcal{E})$, the pose estimation uncertainty~$u(\mathcal{G})$ monotonously decreases with the value of~$|\mathcal{V}|$ when $\det(\boldsymbol{\Pi}) \ge 1$.
    \end{lemma}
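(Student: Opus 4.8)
The plan is to reduce $u(\mathcal{G})$ to an explicit function of the weighted spanning-tree count of $\mathcal{G}$ and then track how each resulting term moves when a single connected frame is appended to the map. First I would exploit the Kronecker structure in~\eqref{eq6}. Writing $n=|\mathcal{V}|$, the reduced Laplacian $\hat{\bm{L}}(\mathcal{G})$ is $(n-1)\times(n-1)$ while $\boldsymbol{\Pi}$ is $6\times 6$, so the determinant identity $\det(A\otimes B)=(\det A)^{q}(\det B)^{p}$ for $A\in\mathbb{R}^{p\times p}$ and $B\in\mathbb{R}^{q\times q}$ yields
\[
 u(\mathcal{G}) = -6\log\det\hat{\bm{L}}(\mathcal{G}) - (n-1)\log\det\boldsymbol{\Pi}.
\]
Next I would invoke the weighted Matrix-Tree theorem to identify $\det\hat{\bm{L}}(\mathcal{G})=\tau(\mathcal{G})$, where $\tau(\mathcal{G})=\sum_{T}\prod_{e\in T}w_e$ is the weighted number of spanning trees of $\mathcal{G}$. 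Connectivity together with the positivity of the edge weights, each existing edge carrying integer weight $w_{f,f'}=|\mathcal{M}_f\cap\mathcal{M}_{f'}|\ge 1$, guarantees $\tau(\mathcal{G})>0$, so the logarithm is well defined.

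It then suffices to show that appending one frame $f$, while keeping the graph connected, leaves both terms non-increasing. For the dimension term, $n$ grows by one, so $-(n-1)\log\det\boldsymbol{\Pi}$ changes by $-\log\det\boldsymbol{\Pi}\le 0$ precisely because $\det\boldsymbol{\Pi}\ge 1$; this is exactly where the hypothesis is used. For the spanning-tree term I would use a short combinatorial bound: fixing any neighbour $f_0$ of $f$ in the enlarged graph $\mathcal{G}'=\mathcal{G}\cup\{f\}$, deleting the leaf edge $(f,f_0)$ sets up a bijection between those spanning trees of $\mathcal{G}'$ in which $f$ is a leaf joined through $(f,f_0)$ and the spanning trees of $\mathcal{G}$, scaling each weight by $w_{f,f_0}$. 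Since every remaining spanning tree of $\mathcal{G}'$ contributes a positive weight,
\[
 \tau(\mathcal{G}')\ \ge\ w_{f,f_0}\,\tau(\mathcal{G})\ \ge\ \tau(\mathcal{G}),
\]
using $w_{f,f_0}\ge 1$. Hence $-6\log\tau$ also does not increase, and combining the two estimates gives $u(\mathcal{G}')\le u(\mathcal{G})$; iterating over successively added frames yields monotonicity in $|\mathcal{V}|$.

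I expect the main obstacle to be the spanning-tree comparison rather than the algebraic decomposition: one must establish $\tau(\mathcal{G}')\ge\tau(\mathcal{G})$ cleanly without evaluating the determinant of the enlarged Laplacian directly, and must rely on the fact that the Matrix-Tree identity is anchor-independent, so the reduced Laplacians of $\mathcal{G}$ and $\mathcal{G}'$ can be compared through $\tau$ rather than through a shared deleted row and column. A secondary point to verify is that connectivity is preserved under the addition, i.e.\ the new frame shares at least one feature point with the current map, which is what simultaneously licenses $\tau(\mathcal{G}')>0$ and the leaf-attachment bijection.
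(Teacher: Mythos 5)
Your proof follows essentially the same route as the paper's: the Kronecker determinant identity splits $u(\mathcal{G})$ into a spanning-tree term and a $\det(\boldsymbol{\Pi})$ dimension term, the weighted Matrix-Tree theorem identifies $\det(\hat{\bm{L}}(\mathcal{G}))$ with the weighted spanning-tree count, and the hypothesis $\det(\boldsymbol{\Pi})\ge 1$ absorbs the growth of the exponent $|\mathcal{V}|-1$. The only divergence is that where the paper cites a known result for the increase of the weighted spanning-tree count under node addition, you prove it directly via the leaf-attachment bijection --- which is in fact slightly more careful, since it honestly yields only the non-strict bound $\tau(\mathcal{G}')\ge\tau(\mathcal{G})$ (with equality possible when the new frame attaches by a single edge of weight one), a corner case the paper's strict inequality glosses over.
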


    \begin{proof}
        See Appendix~\ref{appendix:lemma1}.    
    \end{proof}

Lemma~\ref{lemma1} allows us to reduce the solution space of Problem~P1. The cardinalities of the optimal~$\mathcal{U}_{k}$ and~$\mathcal{C}_{k}$ for time slot $k, \forall k\in \mathcal{K}$ are derived in Theorem~\ref{theorem1}.

    \begin{theorem}\label{theorem1}
       For time slot~$k$, the cardinalities of the optimal~$\mathcal{U}_{k}$ and~$\mathcal{C}_{k}$, i.e.,~$|\mathcal{U}_{k}|$ and~$|\mathcal{C}_{k}|$, are given by
           \begin{equation}\label{eq10}
                |\mathcal{U}_{k}| = \lfloor \alpha^{-1} D^\text{req} d_{k} \rfloor, \,\, \forall k \in \mathcal{K},         
           \end{equation}
        and
           \begin{equation}\label{eq11}
                |\mathcal{C}_{k}| =  V^\text{max} - |\mathcal{U}_{k}|, \,\, \forall k \in \mathcal{K},         
           \end{equation}
        respectively, where $\lfloor \cdot \rfloor$ represents the floor function.
    \end{theorem}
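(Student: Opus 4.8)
The plan is to exploit the monotonicity established in Lemma~\ref{lemma1}: since the pose estimation uncertainty $u(\mathcal{G})$ decreases as the number of frames in a connected 3D map grows (given $\det(\boldsymbol{\Pi})\ge 1$), and since the objective $\sum_{k}\upsilon_k$ is a sum of per-slot averages $\upsilon_k = |\mathcal{F}_{k+1}|^{-1}\sum_{f}u(\mathcal{G}^\text{e}_k\cup\{f\})$ in which each summand is itself monotone in the size of the underlying map, the whole problem reduces to making every working 3D map as large as the constraints permit. I would therefore argue that an optimal policy keeps each map at its maximal admissible size and uploads as many frames as the uplink budget allows, after which the two cardinalities follow by counting.

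For $|\mathcal{U}_k|$, the uplink constraint~\eqref{eq2} gives $|\mathcal{U}_k|\le \alpha^{-1}D^\text{req}d_k$, and since a cardinality is a nonnegative integer, $|\mathcal{U}_k|\le \lfloor \alpha^{-1}D^\text{req}d_k\rfloor$. I would then show this bound is met with equality: enlarging $\mathcal{U}_k$ can only enlarge the candidate pool $\mathcal{U}_k\cup\mathcal{V}^\text{e}_k$ from which the updated map is drawn, so by Lemma~\ref{lemma1} the minimal achievable uncertainty is nonincreasing in $|\mathcal{U}_k|$; uploading fewer frames is thus never strictly beneficial, yielding~\eqref{eq10}. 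For $|\mathcal{C}_k|$, the same monotonicity together with the size cap~\eqref{eq4} forces the updated map to be filled to capacity, i.e.\ $|\mathcal{V}^\text{e}_{k+1}| = V^\text{max}$. Substituting this into the evolution rule~\eqref{eq3} and performing the cardinality balance between the frames added to and removed from $\mathcal{V}^\text{e}_k$ pins down the number of frames to be removed, giving~\eqref{eq11}. Because the size cap is re-imposed independently in every slot, this greedy ``fill-to-capacity'' choice is optimal across all time slots simultaneously, so the per-slot cardinalities decouple even though the underlying problem is sequential.

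The main obstacle I expect is transferring the single-graph monotonicity of Lemma~\ref{lemma1} to the sequential, averaged objective. Lemma~\ref{lemma1} presumes a \emph{connected} 3D map, whereas $u(\cdot)$ is defined through the reduced Laplacian determinant in~\eqref{eq6}, which degenerates for disconnected graphs; hence the argument must ensure that the augmented graphs $\mathcal{G}^\text{e}_k\cup\{f\}$ and the retained subgraphs stay connected so that ``more frames'' genuinely lowers the uncertainty rather than making it ill-defined. The second delicate point is ruling out any cross-slot trade-off --- showing that retaining fewer frames now can never enable a lower-uncertainty map later --- which I would settle by observing that the hard cap $V^\text{max}$ bounds every slot regardless of past choices, so maximal decisions in one slot impose no penalty on subsequent ones, and the greedy choice is therefore globally optimal.
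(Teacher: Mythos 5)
Your proposal takes essentially the same route as the paper's own proof: Lemma~\ref{lemma1} gives monotonicity of the uncertainty in map size, so the optimal policy saturates the uplink constraint~\eqref{eq2} (yielding~\eqref{eq10}), fills the map to the capacity~\eqref{eq4}, and removes as few frames as possible (yielding~\eqref{eq11}). The two ``delicate points'' you flag are genuine, and both are passed over in silence by the paper: Lemma~\ref{lemma1} indeed requires connectivity (for a disconnected graph the reduced-Laplacian determinant vanishes and $u(\cdot)$ degenerates), and the cross-slot decoupling argument you sketch --- the cap $V^\text{max}$ re-binds in every slot, so maximal per-slot choices impose no future penalty --- is exactly the justification the paper omits when it optimizes each slot in isolation.

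One wrinkle, inherited from the paper rather than introduced by you: the cardinality balance you invoke does not actually produce~\eqref{eq11}. Substituting $|\mathcal{V}^\text{e}_{k+1}| = V^\text{max}$ into the evolution rule~\eqref{eq3} gives $|\mathcal{C}_k| = |\mathcal{U}_k| + |\mathcal{V}^\text{e}_k| - V^\text{max}$, which is the paper's own intermediate equation~\eqref{eq12}; this equals the stated $V^\text{max} - |\mathcal{U}_k|$ only when $|\mathcal{V}^\text{e}_k| = 2\left(V^\text{max} - |\mathcal{U}_k|\right)$, and in particular it fails in the steady state $|\mathcal{V}^\text{e}_k| = V^\text{max}$, where the balance yields $|\mathcal{C}_k| = |\mathcal{U}_k|$. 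The paper's proof makes the same unexplained jump from~\eqref{eq12} to~\eqref{eq11}. So treat this as a defect of the statement and of the paper's argument, not one peculiar to yours: your proof is the paper's proof, carried out with more care on the points the paper glosses over, and with the same unresolved counting step at the end.
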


    \begin{proof}
        Based on Lemma~\ref{lemma1} and \eqref{eq4}, the optimal number of camera frames contained in the 3D map, i.e.,~$|\mathcal{V}^\text{e}_{k}|$, should satisfy:
        \begin{equation}\label{eq12}
            | \mathcal{U}_{k}| + |\mathcal{V}^\text{e}_{k}  | - |\mathcal{C}_{k}| = V^\text{max}, \,\, \forall k \in \mathcal{K}.    
        \end{equation}
        For a given value of~$|\mathcal{C}_{k}|$, the objective function value decreases when the number of uploaded camera frames, i.e.,~$|\mathcal{U}_{k}|$, increases. According to~\eqref{eq2}, the optimal value of~$|\mathcal{U}_{k}|$ is given by~\eqref{eq10}. Given a fixed value~$|\mathcal{U}_{k}|$, the objective function value decreases when fewer camera frames are removed from the 3D map. Therefore, the optimal value of~$|\mathcal{U}_{k}|$ is given by~\eqref{eq11}.
    \end{proof}

According to Theorem~\ref{theorem1}, uploading as many camera frames as possible and removing as few camera frames as possible from the 3D map can decrease the pose estimation uncertainty of 3D map management for MAR. While Theorem~\ref{theorem1} gives the cardinalities of the optimal~$\mathcal{U}_{k}$ and~$\mathcal{C}_{k}$, the optimal sets of camera frames for uploading and for removing are yet to be determined. Consequently, there is a need for an approach to determine the optimal~$\mathcal{U}_{k}$ and~$\mathcal{C}_{k}$, which takes the long-term impacts of time-varying device pose and uplink data rate into account.

\subsubsection{Bayes-adaptive MDP}

Next, we reformulate Problem~P1 as an MDP problem. Denote the state space and the action space of the MDP by $\mathcal{S}$ and $\mathcal{A}$, respectively. Denote~$\bm{\mathcal{G}}^\text{e}_{k} = [ \mathcal{G}^\text{e}_{i} ]_{ k-\tau \leq i \leq k }$, $\bm{\mathcal{F}}_{k} = [\mathcal{F}_{k}]_{ k-\tau \leq i \leq k}$, and $\bm{d}_{k} = [d_{k}]_{ k-\tau \leq i \leq k}$. Let $\bm{s}_{k} = [\bm{\mathcal{G}}^\text{e}_{k}, \bm{\mathcal{F}}_{k}, \bm{d}_{k}] \in \mathcal{S}$ and $\bm{a}_{k} = [\mathcal{U}_{k}, \mathcal{C}_{k}] \in \mathcal{A}$ denote the state and the action, i.e., 3D map management decision, at the beginning of time slot~$t$, respectively. Denote the state transition function by $P(\bm{s}_{k+1}|\bm{s}_{k}, \bm{a}_{k})$. In addition, we define the reward function for time slot~$k$ as the negative of the long-term pose estimation uncertainty defined in~\eqref{eq8}, given by:
    \begin{equation}\label{eq13}
        r_{k} = - \upsilon_{k}, \,\, \forall k \in \mathcal{K}.        
    \end{equation}
With the MDP model and Theorem~\ref{theorem1}, we reformulate Problem~P1 as the following discounted MDP problem for sequential 3D map management decision making in the presence of time-varying device pose and uplink data rate:
    \begin{subequations}\label{p2}
        \begin{align}
            \textrm{P2:} \,\, & \max_{ \{ \bm{a}_{k} \}_{k \in \mathcal{K}} } \sum_{ k \in \mathcal{K} }{ \gamma^{k} r_{k} }\\
            \textrm{s.t.} &\,\, (\ref{p1}\text{b-d}), \eqref{eq10}, \eqref{eq11}
        \end{align}
    \end{subequations}
where~$\gamma \in (0,1)$ is the discount factor for quantifying the long-term impact of an action on the rewards obtained in future time slots~\cite{zhou2022digital}. Our goal is to find a policy, i.e.,~$\pi$, for making proper 3D map management decisions in each state. 

The dynamics in Problem~P2 encompass both variations in device pose and uplink data rate. The device pose is determined solely by human behavior, whereas uplink data rate is mostly determined by network conditions. Therefore, the variations in device pose and uplink data rate can be considered independent. As mentioned in Subsection~\ref{sec22}, due to the unknown random variable~$x_{t}$, the specific parameters of the $N$-state Markov chain may vary across time intervals, thereby resulting in a non-stationary uplink data rate. For tractability, we make the assumption that only the dynamics of uplink data rate is non-stationary in problem~P2, rather than device pose variations. Correspondingly, Problem~P2 becomes a Bayes-adaptive MDP (BAMDP) problem, and the transition probabilities corresponding to the uplink data rate~$P(\bm{d}_{k+1}|\bm{d}_{k}, x_{t}), k \in \mathcal{K}_{t}$, are time-varying, where $x_{t} \sim p(x)$ follows an unknown distribution with some latent parameters~\cite{ghavamzadeh2015bayesian}. To solve the BAMDP problem, we establish a digital twin (DT) for capturing the unknown distribution~$p(x)$, presented in Section~\ref{sec_dt}, and propose a model-based deep reinforcement learning (DRL) method using the DT to adapt to both the time-varying uplink data rate and the device pose.

\section{User Digital Twin}\label{sec_dt}

In this section, we create a DT for an individual MAR device, referred to as a user DT (UDT), to establish a data model that can capture the unknown distribution~$x_{t} \sim p(x)$ in approximating~$P(\bm{d}_{k+1}|\bm{d}_{k}, x_{t})$. Our UDT design evolves from the framework presented in~\cite{shen2021holistic}, with a specific focus on assisting 3D map management in MAR. The UDT, consisting of an \emph{MAR device data profile} and several \emph{UDT functions}, is located at the BS. A network controller is responsible for maintaining and updating the MAR device data profile through the execution of the following four UDT functions: (1) real experience collection, (2) latent feature extraction, (3) artificial experience generation, and (4) UDT update. We illustrate the workflow of the designed UDT in Fig.~\ref{udt}. In the ``User Digital Twin'' segment (to the left of the dashed vertical line in Fig.~\ref{udt}), the real experience of 3D map management, including state, action, reward, and next state, is collected and stored in the MAR device data profile at the end of each time slot. Based on the collected real experiences, a deep variational inference method is used to extract latent features from each real experience and generate artificial experiences based on the extracted latent features, which correspond to the aforementioned UDT functions~(2) and~(3), respectively. The generated artificial experiences are also stored in the MAR device data profile. Meanwhile, the UDT update function (i.e., UDT function~(4)) is used to update the parameters of other UDT functions, such as the weights of the DNNs. Details of the four UDT functions are presented below.

    \begin{figure}[t]
        \centering
        \includegraphics[width=0.5\textwidth]{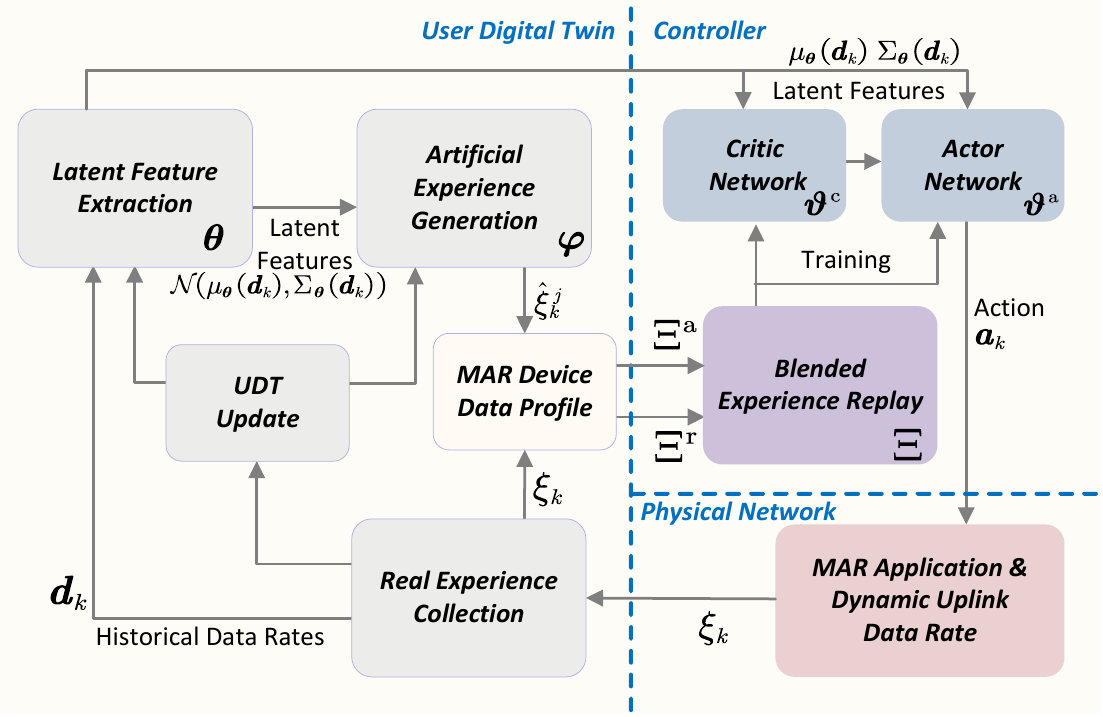}
        \caption{The workflow of the designed UDT and UDT-based 3D map management.}\label{udt}
    \end{figure}

\subsection{Real Experience Collection}

A real experience collected at the beginning of time slot~$k+1$ is the tuple~$\xi_{k} = (\bm{s}_{k}, \bm{a}_{k}, r_{k}, \bm{s}_{k+1})$. In any given time slot, the UDT contains the real experiences of 3D map management collected in preceding time slots for learning the 3D map management policy in subsequent time slots. Let~$\Xi^\text{r}$ denote the set of collected real experiences contained in the UDT, which the UDT can update per time slot by collecting a new real experience. As shown in Fig.~\ref{udt}, the collected data in~$\Xi^\text{r}$ are stored in the MAR device data profile for supporting the other three UDT functions and the decision making on 3D map management.

\subsection{Latent Feature Extraction}\label{sec_dt_2}

Since the distribution~$x_{t} \sim p(x)$ is unknown~\emph{a priori}, calculating~$p(\bm{d}_{k+1} | \bm{d}_{k}, x_{t})$ is not possible. Therefore, we adopt a deep variational inference method to capture the unknown distribution~$p(x)$. 

Define a $Z$-dimensional variable~$\bm{z}$, which follows a normal distribution, i.e.,~$\bm{z} \sim \bm{\mathcal{N}}(\bm{0}, \bm{I}_{Z})$. Given~$\bm{z}$, we introduce a function~$q(\bm{z} | \bm{d}_{k}, \bm{d}_{k+1}; \boldsymbol{\theta})$ parameterized by~$\boldsymbol{\theta}$ to approximate the probability $p(\bm{z} | \bm{d}_{k}, \bm{d}_{k+1}, x_{t})$ and a function~$q( \bm{d}_{k+1} | \bm{d}_{k}, \bm{z}; \boldsymbol{\varphi})$ parameterized by~$\boldsymbol{\varphi}$ to approximate probability $p( \bm{d}_{k+1} | \bm{d}_{k},\bm{z} )$. Since directly maximizing the likelihood~$p(\bm{d}_{k+1}| \bm{d}_{k}; \boldsymbol{\varphi})$ is intractable, our goal is to maximize a lower bound of~$p(\bm{d}_{k+1}|\bm{d}_{k}; \boldsymbol{\varphi})$, given by:
    \begin{equation}\label{eq18}
        \begin{aligned} 
             \log p(\bm{d}_{k+1} | \bm{d}_{k}; \boldsymbol{\varphi}) \ge & \mathbb{E}_{\bm{z} \sim q(\bm{z} | \bm{d}_{k}, \bm{d}_{k+1}; \boldsymbol{\theta})} \left[\log q(\bm{d}_{k+1} | \bm{d}_{k}, \bm{z}; \boldsymbol{\varphi}) \right] \\
            & \,\,\,\,  - l_\text{KL}(q(\bm{z} | \bm{d}_{k}, \bm{d}_{k+1}; \boldsymbol{\theta}) || p(\bm{z})),  
        \end{aligned} 
    \end{equation}
where $l_\text{KL}(q(\bm{z} | \bm{d}_{k}, \bm{d}_{k+1}; \boldsymbol{\theta}) || p(\bm{z}))$ denotes the Kullback–Leibler (KL) divergence:
    \begin{equation}\label{eq19}
        \begin{aligned} 
        & l_\text{KL}(q(\bm{z} | \bm{d}_{k}, \bm{d}_{k+1}; \boldsymbol{\theta}) || p(\bm{z})) = \\ 
        & \,\,\,\, \mathbb{E}_{\bm{z} \sim q(\bm{z} | \bm{d}_{k}, \bm{d}_{k+1}; \boldsymbol{\theta})} \left[ \log q(\bm{z} | \bm{d}_{k}, \bm{d}_{k+1}; \boldsymbol{\theta}) - \log  p(\bm{z}) \right].
        \end{aligned} 
    \end{equation}
We omit the full derivation of the formula in~\eqref{eq18} and refer interested readers~\cite{sachdeva2019sequential}.

Considering that a multi-dimensional normal distribution can be used to capture various data distributions, we adopt a normal distribution~$\bm{\mathcal{N}}(\bm{z} | \mu_{\boldsymbol{\theta}}(\bm{d}_{k}), \Sigma_{\boldsymbol{\theta}}(\bm{d}_{k}))$ to approximate~$q(\bm{z} | \bm{d}_{k}, \bm{d}_{k+1}; \boldsymbol{\theta})$ so that the term~$l_\text{KL}(q(\bm{z} | \bm{d}_{k}, \bm{d}_{k+1}; \boldsymbol{\theta}) || p(\bm{z}))$ in \eqref{eq19} has a closed form~\cite{kingma2013auto}. Specifically, we leverage a DNN to output the parameters of the $Z$-dimensional normal distribution~$(\bm{z} | \bm{d}_{k}, \bm{d}_{k+1}; \boldsymbol{\theta})$, i.e., mean $\mu_{\boldsymbol{\theta}}(\bm{d}_{k})$ and variance $\Sigma_{\boldsymbol{\theta}}(\bm{d}_{k})$. Given parameter~$\boldsymbol{\theta}$, we refer to the output $\mu_{\boldsymbol{\theta}}(\bm{d}_{k})$ and $\Sigma_{\boldsymbol{\theta}}(\bm{d}_{k})$ as the \emph{latent features} representing the unknown distribution~$x_{t} \sim p(x)$, extracted from input~$\bm{d}_{k}$. In addition, we approximate the function~$q( \bm{d}_{k+1} | \bm{d}_{k}, \bm{z}; \boldsymbol{\varphi})$ through another DNN parameterized by~$\boldsymbol{\varphi}$. Given~$\bm{d}_{k}$, the only element in $\bm{d}_{k+1}$ that needs to be predicted is $d_{k+1}$. The value of $d_{k+1}$, denoted by~$\hat{d}_{k+1}$, can be output as follows: 
    \begin{equation}\label{eq20}
        \hat{d}_{k+1} = \phi_{\boldsymbol{\varphi}} (\bm{\tilde{z}}), \,\, \forall k \in \mathcal{K}_{t}.
    \end{equation}
where~$\phi_{\boldsymbol{\varphi}} (\cdot)$ denotes the DNN parameterized by~$\boldsymbol{\varphi}$ with a softmax activation function, and vector~$\bm{\tilde{z}}$ is a sample from the distribution $\bm{\mathcal{N}}(\bm{z} | \mu_{\boldsymbol{\theta}}(\bm{d}_{k}), \Sigma_{\boldsymbol{\theta}}(\bm{d}_{k}))$. Consequently, the outputs of the DNNs with parameters~$\boldsymbol{\vartheta}$ and $\boldsymbol{\varphi}$ are the extracted latent features, i.e.,~$\mu_{\boldsymbol{\theta}}(\bm{d}_{k})$ and $\Sigma_{\boldsymbol{\theta}}(\bm{d}_{k})$, and the value of~$\hat{d}_{k+1}$, respectively. Both outputs are used to facilitate our proposed DRL methods for solving Problem~P2, as shown in Fig.~\ref{udt}.

\subsection{Artificial Experience Generation}

The artificial experience generation function of the UDT can generate data samples for~$\bm{d}_{k+1}$ by leveraging~$\bm{d}_{k}$ and the extracted latent features~$\mu_{\boldsymbol{\theta}}(\bm{d}_{k})$ and $\Sigma_{\boldsymbol{\theta}}(\bm{d}_{k})$ at time slot~$k$. Upon feeding~$\bm{d}_{k}$ into the two DNNs parameterized by~$\boldsymbol{\vartheta}$ and~$\boldsymbol{\varphi}$ sequentially, the output of the Softmax activation function in the DNN represented by~\eqref{eq20} yields a probability vector. This vector encapsulates the likelihood of each potential state among the $N$ states in the Markov chain pertaining to the uplink data rate, as mentioned in Subsection~\ref{sec22}. As a result, the artificial experience generation function of the UDT can generate a set of data samples, each of which represents a possible uplink data rate within time slot~$k+1$, denoted by~$\hat{d}_{k+1}^{j}$. Let~$\hat{\bm{d}}_{k+1}^{j} = [\bm{d}_{k}, \hat{d}_{k+1}^{j}]$ denote a generated data sample according to the collected real data~$\bm{d}_{k}$. 

Based on each generated sample of the uplink data rate, this UDT function can generate an artificial experience, denoted by~$\hat{\xi}_{k}^{j} = (\hat{\bm{s}}_{k}^{j}, \hat{\bm{a}}_{k}^{j}, \hat{r}_{k}^{j}, \hat{\bm{s}}_{k}^{j+1})$. Specifically, given state~$\hat{\bm{s}}_{k}^{j}$, this UDT function can randomly select an action, denoted by $\hat{\bm{a}}_{k}^{j}$, which satisfies the constraints in Problem~P2 and is used for the emulation of 3D map management. Subsequently, this UDT function can emulate the 3D map at time slot~$k+1$, denoted by~$\hat{\bm{\mathcal{G}}}^{\text{e},j}_{k+1}$, as~\eqref{eq3} given action~$\hat{\bm{a}}_{k}^{j}$ taken in state~$\hat{\bm{s}}_{k}^{j}$ and calculate the reward using~\eqref{eq13}. As a result, given any tuple of real experience~$\xi_{k} = (\bm{s}_{k}, \bm{a}_{k}, r_{k}, \bm{s}_{k+1}) \in \Xi^\text{r}$, this UDT function can generate $J$ artificial experiences with the~$j$\,th tuple given $(\bm{s}_{k}, \hat{\bm{a}}_{k}^{j}, \hat{r}_{k}^{j}, \hat{\bm{s}}_{k}^{j+1} )$ where:
    \begin{equation}\label{eq21}
        \begin{aligned}
            \hat{\bm{s}}_{k}^{j+1} &  = [\hat{\bm{\mathcal{G}}}^{\text{e},j}_{k+1}, \bm{\mathcal{F}}_{k+1}, \hat{\bm{d}}_{k+1}^{j}]. 
        \end{aligned}
    \end{equation}
We define the set of these generated artificial experiences as~$\Xi^\text{a} = \left\{\hat{\xi}_{k}^{j} | \forall 0< j \le J, \xi_{k} \in \Xi^\text{r} \right\}$. As shown in Fig.~\ref{udt}, the data of generated artificial experiences are also stored in the MAR device data profile and can be used to support making 3D map management decisions.

\subsection{UDT Update}

    \begin{algorithm}[t] 
        \caption{UDT Update}\label{alg1}
        \LinesNumbered
        \textbf{Input:} $\Xi^\text{r}$\\
        \textbf{Initialization:} $\boldsymbol{\theta}$, $\boldsymbol{\varphi}$\\
         $\boldsymbol{\theta}_{*}$, $\boldsymbol{\varphi}_{*}$ $\leftarrow$ Update $\boldsymbol{\theta}$ and $\boldsymbol{\varphi}$ by optimizing~\eqref{eq22} based on real experiences $\Xi^\text{r}$;\\
        $\Xi^\text{a}$ $\leftarrow$ Update with the newly generated tuples~$\left\{\hat{\xi}_{k}^{j} | \forall 0< j \le J, \xi_{k} \in \Xi^\text{r} \right\}$ using~\eqref{eq21};\\
        \textbf{Output:} $\boldsymbol{\theta}_{*}$, $\boldsymbol{\varphi}_{*}$, $\Xi^\text{a}$
    \end{algorithm} 

Both the latent feature extraction and the artificial experience generation functions require online optimization of parameters~$\boldsymbol{\theta}$ and~$\boldsymbol{\varphi}$. The optimization is conducted by minimizing the loss function given by the right-hand side of~\eqref{eq18} as follows: 
     \begin{equation}\label{eq22}
        \begin{aligned}
            L(\boldsymbol{\theta}, \boldsymbol{\varphi}) =  & \mathbb{E}_{\bm{z} \sim q(\bm{z} | \bm{d}_{k}, \bm{d}_{k+1}; \boldsymbol{\theta})} \left[\log q(\bm{d}_{k+1} | \bm{d}_{k}, \bm{z}; \boldsymbol{\varphi}) \right] \\
            &  - l_\text{KL}(q(\bm{z} |\bm{d}_{k+1}, \bm{d}_{k}; \boldsymbol{\theta}) || p(\bm{z})).
        \end{aligned}
    \end{equation}
The gradient descent method can be employed to find the optimal parameters~$\boldsymbol{\theta}_{*}$ and~$\boldsymbol{\varphi}_{*}$, utilizing the reparametrization trick in~\cite{zintgraf2019varibad}, that minimize the gradients of~$L(\boldsymbol{\theta}, \boldsymbol{\varphi})$ for DNN backward propagation. We introduce the update of DNNs with parameters~$\boldsymbol{\theta}$ and~$\boldsymbol{\varphi}$ and the generation of artificial experiences in Algorithm~\ref{alg1}. In Line~3, we optimize the parameters~$\boldsymbol{\theta}$ and~$\boldsymbol{\varphi}$ by minimizing the loss function in~\eqref{eq22}. In Line~4, given a newly collected set of real experiences, i.e.,~$\Xi^\text{r}$, the UDT generates a new set of artificial experiences for adapting to the dynamic network uplink data rate. Our UDT design allows for flexible adjustments of parameters~$\boldsymbol{\theta}$ and $\boldsymbol{\varphi}$ according to the newly collected real experiences, which happens once every~$W$ time slots.

\section{Model-based DRL Method}\label{sec_rl}

While the designed UDT can capture the dynamics of the uplink data rate, we still need a method to adapt to temporal variations in the \bl{device} pose and the uplink data rate when solving Problem~P2. Conventional DRL-based methods can be used to solve MDP problems characterized by unknown but stationary environmental dynamics~\cite{ghavamzadeh2015bayesian}. However, they cannot be directly applied to solving Problem~P2 due to the non-stationary nature of the MDP across time intervals.

Therefore, we propose a \emph{model-based} DRL (MBRL) method by leveraging extensive organized data provided by the UDT, including the extracted latent features of the uplink data rate and the generated artificial experiences, to solve the BAMDP problem. Our proposed method is built upon an off-policy model-free DRL algorithm, in which historical experiences can be used offline for training the DNNs~\cite{zhou2022digital,cheng2019space}. Next, we will present the model-free DRL framework and our designs of UDT-assisted DRL after that.

\subsection{Model-free DRL}

Given that the UDT can extract latent features regarding dynamic uplink data rate through the approximation of random variable~$x_{t}$, we extend the originally state~$\bm{s}_{k}$ by defining the augmented state as~$\dot{\bm{s}}_{k} = [\bm{s}_{k}, \mu_{\boldsymbol{\theta}}(\bm{d}_{k}), \Sigma_{\boldsymbol{\theta}}(\bm{d}_{k})]$. The latent features extracted by using the UDT can represent the information on the random variable~$x_{t}$, which results in the non-stationary uplink data rate. Therefore, adopting the augmented state can transform the BAMDP into an MDP, thereby enabling us to apply a DRL-based method for learning a 3D map management policy. 

Using the augmented state, we adopt an actor-critic framework to learn the optimal policy~$\pi^{*}$. Define a Q-value function of state~$\dot{\bm{s}}_{k}$ and action~$\bm{a}_{k}$ as the accumulated discounted reward, as follows: 
    \begin{equation}\label{}
        Q(\dot{\bm{s}}_{k}, \bm{a}_{k}) =  \sum_{j = 1}^{J}{\gamma^{j} r_{k+j+1} }, \,\, \forall k \in \mathcal{K},
    \end{equation}
where the Q-value quantifies the long-term impact of each action on subsequent states and actions~\cite{zhou2022digital}. However, calculating the Q-value directly is impossible due to unknown state transitions in the future. Thus, we derive the Q-value calculated for time slot~$k$ by using the obtained reward at time slot~$k$ and the Q-value calculated for time slot~$k+1$, as follows:
    \begin{equation}\label{eq24}
        Q(\dot{\bm{s}}_{k}, \bm{a}_{k}) = r_{k} + \gamma Q(\dot{\bm{s}}_{k+1}, \bm{a}_{k+1}), \forall k \in \mathcal{K}.
    \end{equation}
Given~\eqref{eq24}, model-free DRL methods approximate the Q-value function by minimizing the temporal difference between the Q-values for different time slots~\cite{cheng2019space,luo2022deep}. Specifically, a DNN (the critic network) with the parameter~$\boldsymbol{\vartheta}^\text{c}$ is leveraged to approximate the Q-value function, i.e.,~$Q(\bm{s}, \bm{a} ; \boldsymbol{\vartheta}^\text{c})$. The loss function for optimizing the parameter~$\boldsymbol{\vartheta}^\text{c}$ is given by:
    \begin{equation}\label{eq25}
        \begin{aligned}
            & L(\boldsymbol{\vartheta}^\text{c}) = \\
            & \,\,\,\, \frac{1}{|\Xi|} \sum_{\xi_{k} \in \Xi} \left( r_{k} + \gamma Q(\dot{\bm{s}}_{k+1}, \pi(\dot{\bm{s}}_{k+1}); \boldsymbol{\vartheta}^\text{c}) - Q(\dot{\bm{s}}_{k}, \bm{a}_{k}; \boldsymbol{\vartheta}^\text{c}) \right)^{2}, 
        \end{aligned}
    \end{equation}
where $\Xi$ denotes a batch of tuples $\xi_{k}$ selected from historical experiences for training, and $|\Xi|$ represents the batch size. Note that we consider an off-policy DRL framework, wherein the policy $\pi(\dot{\bm{s}}_{k})$ employed for Q-value approximation in~\eqref{eq25} may be different from the policy used for actual action execution in practice.

    \begin{algorithm}[t] 
        \caption{AMM Algorithm}\label{alg2}
        \LinesNumbered
        \textbf{Input:} $W$, $I$, $|\Xi|$\\
        \textbf{Initialization:} $\boldsymbol{\vartheta}^\text{c}$, $\boldsymbol{\vartheta}^\text{a}$, $\boldsymbol{\theta}$, $\boldsymbol{\varphi}$, $\Xi^\text{r}$, $\Xi^\text{a}$, ${\bm{s}}_{1}$\\
        \For{$k \in \mathcal{K}$}
        {   
            $\mu_{\boldsymbol{\theta}}(\bm{d}_{k})$, $\Sigma_{\boldsymbol{\theta}}(\bm{d}_{k})$ $\leftarrow$ the UDT extracts the latent features as~Subsection\ref{sec_dt_2};\\
            Select action $\bm{a}_{k} = \pi \left( \dot{\bm{s}}_{k}; \boldsymbol{\vartheta}^\text{a} \right)$;\\
            $r_{k}$, $\bm{s}_{k+1}$ $\leftarrow$ take action $\bm{a}_{k} $ on state~$\bm{s}_{k}$;\\
            
            Combine $I$ tuples randomly selected from $\Xi^\text{r}$ and $|\Xi|-I$ tuples randomly selected from $\Xi^\text{a}$ as $\Xi$;\\
            Update $\boldsymbol{\vartheta}^\text{c}$ by minimizing~\eqref{eq25};\\
            Update $\boldsymbol{\vartheta}^\text{a}$ by using policy gradient descent in~\eqref{eq26};\\

            $\Xi^\text{r}$ $\leftarrow$ Update with the latest tuples of real experiences, i.e.,~$\{ \xi_{j} | k-|\Xi^\text{r}| < j \leq k \}$;\\ 

            \If{$k \mod W = W-1$}
            {   
                $\boldsymbol{\theta}_{*}$, $\boldsymbol{\varphi}_{*}$, $\Xi^\text{a}$ $\leftarrow$ Run Algorithm~\ref{alg1} for UDT update;\\
            } 
            
            $k$, $\bm{s}_{k}$ $\leftarrow$ $k+1$, $\bm{s}_{k+1}$;\\       
        }
        \textbf{Output:} $\pi(\dot{\bm{s}}_{k};\boldsymbol{\vartheta}^\text{a}_{*})$

    \end{algorithm} 

Given an approximated Q-value function, our goal is to find a policy~$\pi(\dot{\bm{s}}_{k})$ for taking an action in each state with the consideration of the long-term impact of the action. Model-free DRL can be used to find the parameters of the optimal policy through parameterizing the policy~$\pi(\dot{\bm{s}}_{k})$. Specifically, we approximate the 3D map management policy~$\pi(\dot{\bm{s}}_{k}; \boldsymbol{\vartheta}^\text{a})$ by using another DNN (the actor network) with parameter~$\boldsymbol{\vartheta}^\text{a}$. Given~$Q(\dot{\bm{s}}_{k}, \bm{a}_{k}; \boldsymbol{\vartheta}^\text{c})$, the policy gradient calculated for optimizing the policy~$\pi(\dot{\bm{s}}_{k};\boldsymbol{\vartheta}^\text{a}_{*})$ is given as follows:
    \begin{equation}\label{eq26}
        \begin{aligned}
            & \nabla_{\boldsymbol{\vartheta}^\text{a}} \Omega (\boldsymbol{\vartheta}^\text{a}) = \\
            & \,\,\,\, \frac{1}{|\Xi|} \sum_{\xi_{k} \in \Xi} \nabla_{\mathbf{w}} Q \left(\dot{\bm{s}}_{k}, \bm{a}_{k} ; \boldsymbol{\vartheta}^\text{c} \right) \big|_{\pi(\dot{\bm{s}}_{k}; \boldsymbol{\vartheta}^\text{a})} \nabla_{ \boldsymbol{\vartheta}^\text{a} }\pi\left( \dot{\bm{s}}_{k}; \boldsymbol{\vartheta}^\text{a} \right),
            \end{aligned}
    \end{equation}
where~$\Omega (\boldsymbol{\vartheta}^\text{a})$ represents the value of the objective function (\ref{p2}a) achieved with the 3D map management policy~~$\pi(\dot{\bm{s}}_{k}; \boldsymbol{\vartheta}^\text{a})$.

\subsection{Blended Experience Replay}

With state augmentation, the expansive state space poses a challenge for a model-free DRL-method in finding the optimal policy. This is because the required volume of collected real experiences for policy learning significantly increases with the state space size. Therefore, our idea behind the proposed MBRL method is using both the collected real experiences and the generated artificial experiences provided by the UDT for accelerating the policy learning. Specifically, the proposed MBRL method incorporates a new mechanism, called \emph{blended experience replay}, to accelerate the training of the actor and critic networks by using both types of experiences from the UDT. Correspondingly, the batch used for DNN training in~\eqref{eq25} and~\eqref{eq26} can be selected from~$\Xi^\text{a}$, $\Xi^\text{r}$, or both.

We propose an adaptive map management (AMM) algorithm using the UDT data in Algorithm~\ref{alg1}. In Line~4, we use the UDT to extract the latent features~$\mu_{\boldsymbol{\theta}}(\bm{d}_{k})$ and $\Sigma_{\boldsymbol{\theta}}(\bm{d}_{k})$ for enabling state augmentation as mentioned in Subsection~\ref{sec_dt_2}. This procedure corresponds to the grey arrows crossing from the ``User Digital Twin'' segment to the ``Controller'' segment in Fig.~\ref{udt}. In Lines~5-6, the decision on the output action~$\bm{a}_{k}$ is made by the actor network given state~$\dot{\bm{s}}_{k}$, and the corresponding reward and the next state are observed. Next, we train the actor and critic networks by using the blended experience replay, as shown in Lines~7-9. Specifically, we randomly select~$I$ tuples of collected real experiences from $\Xi^\text{r}$ and~$|\Xi|-I$ tuples of generated artificial experiences from~$\Xi^\text{a}$, respectively, and combine them in a batch to update the parameters of the actor and the critic networks. The procedure of using the UDT to support blended experience replay corresponds to the grey arrows from the ``MAR Device Data Profile'' block to the ``Blended Experience Replay'' block in Fig.~\ref{udt}. The parameter~$I$ controls the ratio of the amount of artificial experiences to the overall batch size in training. In Line~11, the UDT updates the set of real experiences~$\Xi^\text{r}$ by adding the newly collected \bl{tuples} into the set. This step corresponds to the grey arrow crossing from the ``Physical Network'' segment to the ``User Digital Twin'' segment in Fig.~\ref{udt}.

In addition to training the DNNs for MBRL, we adopt a resource-efficient online training method to update the UDT as given in lines~11-13. We introduce a hyper-parameter~$W$ for UDT update, and optimize the parameters~$\boldsymbol{\theta}$ and $\boldsymbol{\varphi}$ per~$W$ time slots as mentioned in~Section~\ref{sec_dt}. Meanwhile, the UDT generates new artificial experiences according to the newly collected real experiences, which increases the adaptivity of the proposed MBRL method.

\section{Numerical Results}

\begin{table}[t]
    \footnotesize 
    \centering
    \captionsetup{justification=centering,singlelinecheck=false}
    \caption{Simulation Parameters}\label{table1}
    \begin{tabular}{c|c|c|c}
        \hline\hline
         Parameter & Value & Parameter & Value\\
         \hline\hline
         $D^\text{req}$ & 0.5\,second & $\alpha$ & 5\,Mbits \\
         \hline
         $d_{k}^\text{max}$& 80\,Mbits/second & $d_{k}^\text{min}$ & 40\,Mbits/second\\
         \hline
         $V^\text{max}$ & 25-45\,frames& $F$ & 60\,frames\\
         \hline
    \end{tabular}
\end{table}

In our simulations, we use the ``westgate-playroom'' camera frame sequence in the SUN3D dataset~\cite{xiao2013sun3d}, which contains data collected in a real indoor environment. The set of 3D map points detected in each camera frame~$\mathcal{M}_{f}$ is obtained using the open-source ORB-SLAM framework~\cite{campos2021orb}. Important parameter settings are listed in Table~I.

We utilize two long-short-term-memory (LSTM) layers with 300 neurons, followed by four fully connected layers with (256, 128, 128, 32) neurons, to build the DNN with parameter~$\boldsymbol{\theta}$ for latent feature extraction in the UDT. Meanwhile, we use four fully connected layers with (256, 128, 64, 16) neurons to build the DNN with parameter~$\boldsymbol{\varphi}$ for artificial experience generation in the UDT. 

For the actor and the critic DNNs used in the MBRL scheme, we leverage two graph convolutional networks (GCNs) as embedding layers to capture the relationship among camera frames in the 3D map. Each GCN consists of two graph convolutional layers with (128, 32) neurons. Following the embedding layer, three fully connected layers with (64, 32, 32) neurons and four fully connected layers with (64, 32, 16, 4) neurons are used for building the critic and the actor DNNs, respectively. 

We adopt the following 3D map management schemes in MAR as benchmark~\cite{apicharttrisorn2020characterization, campos2021orb,chen2023adaptslam}:
    \begin{itemize}
        \item \emph{Latest Frame First (LFF)}: The 3D map is periodically updated by adding the lastly captured camera frames and removing the camera frames captured the earliest;

        \item \emph{Periodical Uploading (PU)}: The camera frames to upload are selected uniformly from all camera frames captured within each time slot, and the earliest captured camera frames are removed from the 3D map; 

        \item \emph{ADAPT}~\cite{chen2023adaptslam}: The set of camera frames contained in the 3D map is selected from all camera frames captured within each time slot, by using an optimization method proposed to minimize the uncertainty.
    \end{itemize}

\subsection{Performance of UDT}

In this subsection, we compare the performance of the proposed UDT-based approach with that of LSTM-based and Markov model-driven approaches in capturing the dynamics of the uplink data rate. 

    \begin{figure}[t]
        \centering
        \begin{subfigure}[b]{0.45\textwidth}
            \includegraphics[width=\textwidth]{./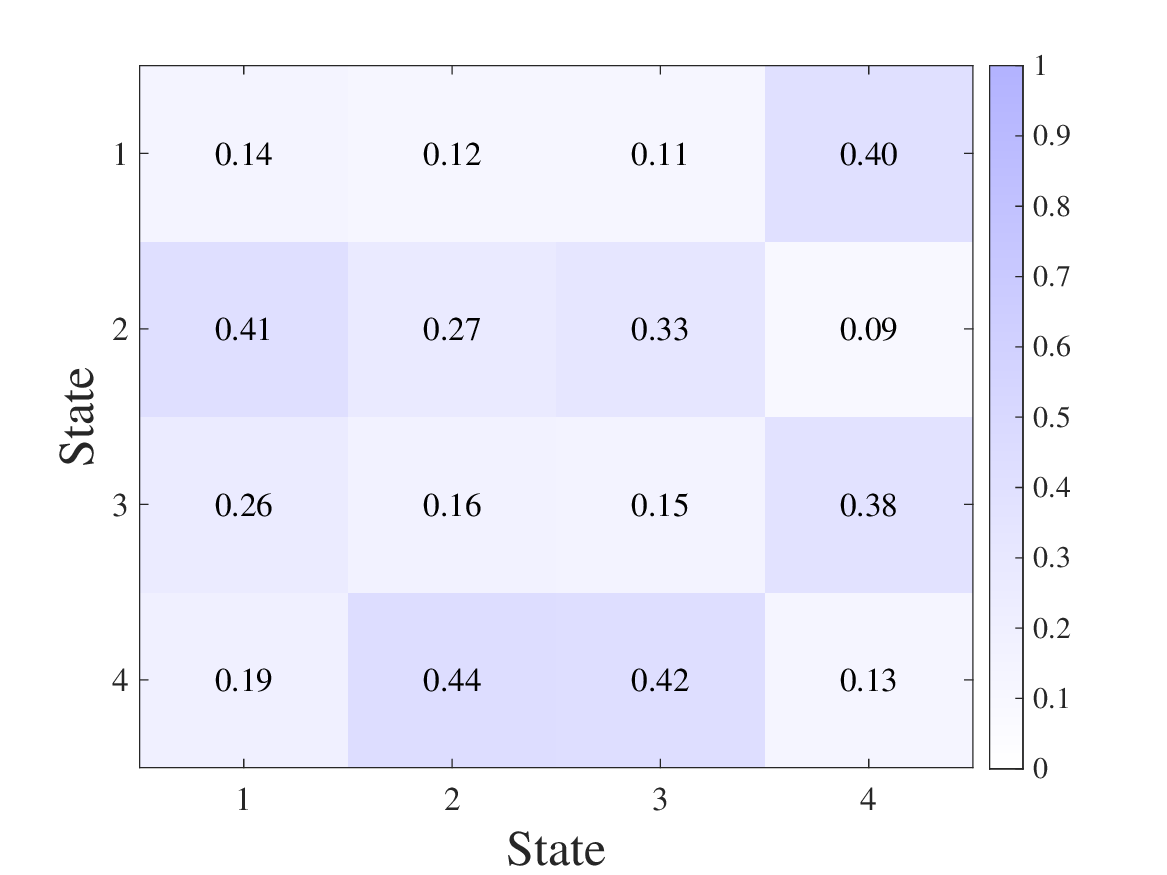}
            \caption{The estimated values using the UDT.}
            \label{fig:sub1}
        \end{subfigure}
        \quad
        \begin{subfigure}[b]{0.45\textwidth}
            \includegraphics[width=\textwidth]{./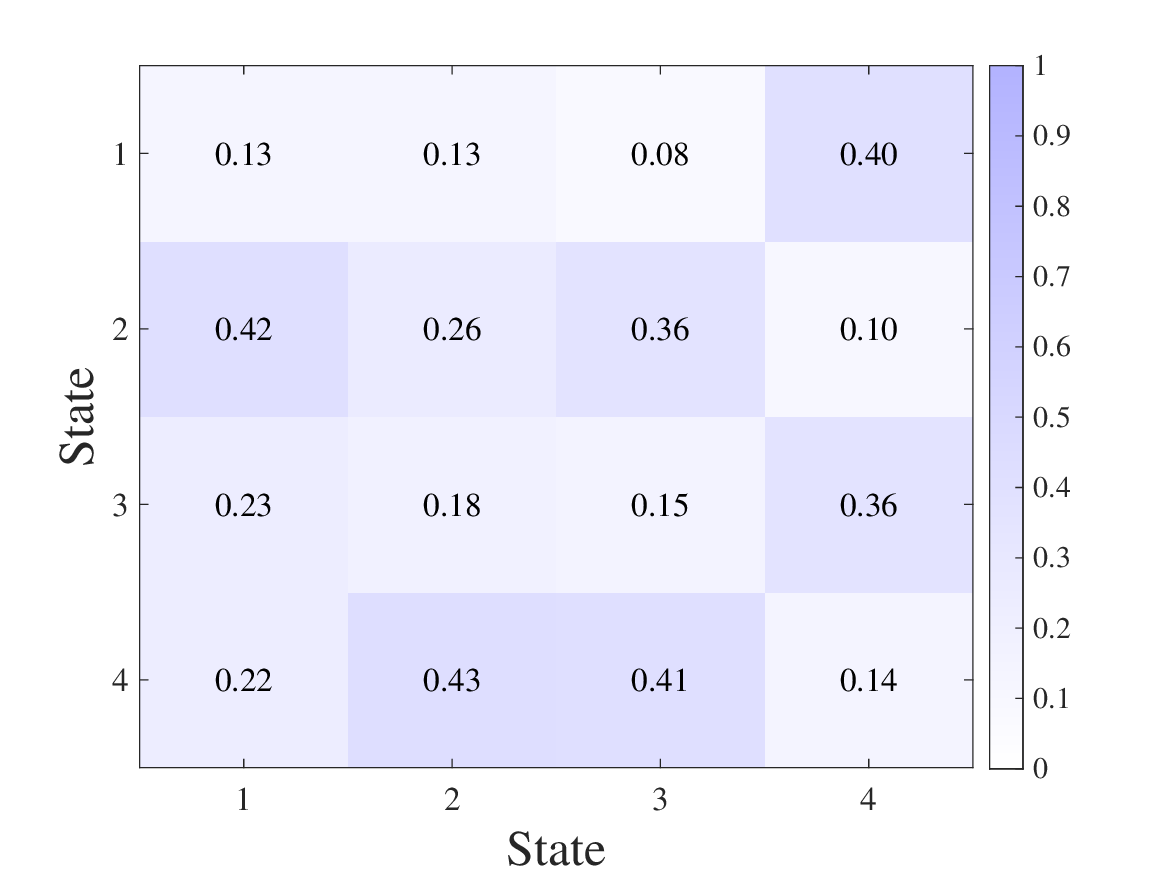}
            \caption{The actual values.}
            \label{fig:sub2}
        \end{subfigure}
        \caption{The estimated and the actual values of the state transition matrix of a $4$-state Markov chain within one time interval.}
        \label{fig34}
    \end{figure} 

We first show the accuracy of the designed UDT in capturing the dynamics of the time-varying uplink data rate within one time interval. In Fig.~\ref{fig34}, a comparison is made between the estimated and actual values of the state transition matrix for a 4-state Markov chain within one time interval. Each state in the Markov chain corresponds to a distinct uplink data rate,~i.e.,~$d_{k}$, and each value in Fig.~\ref{fig34} shows a state transition probability. We can observe that the values estimated by the UDT are very close to the actual values of the state transition matrix, which underlines the effectiveness of the UDT in capturing the stationary uplink data rate within one time interval.

    \begin{figure}[t]
        \centering
        \includegraphics[width=0.45\textwidth]{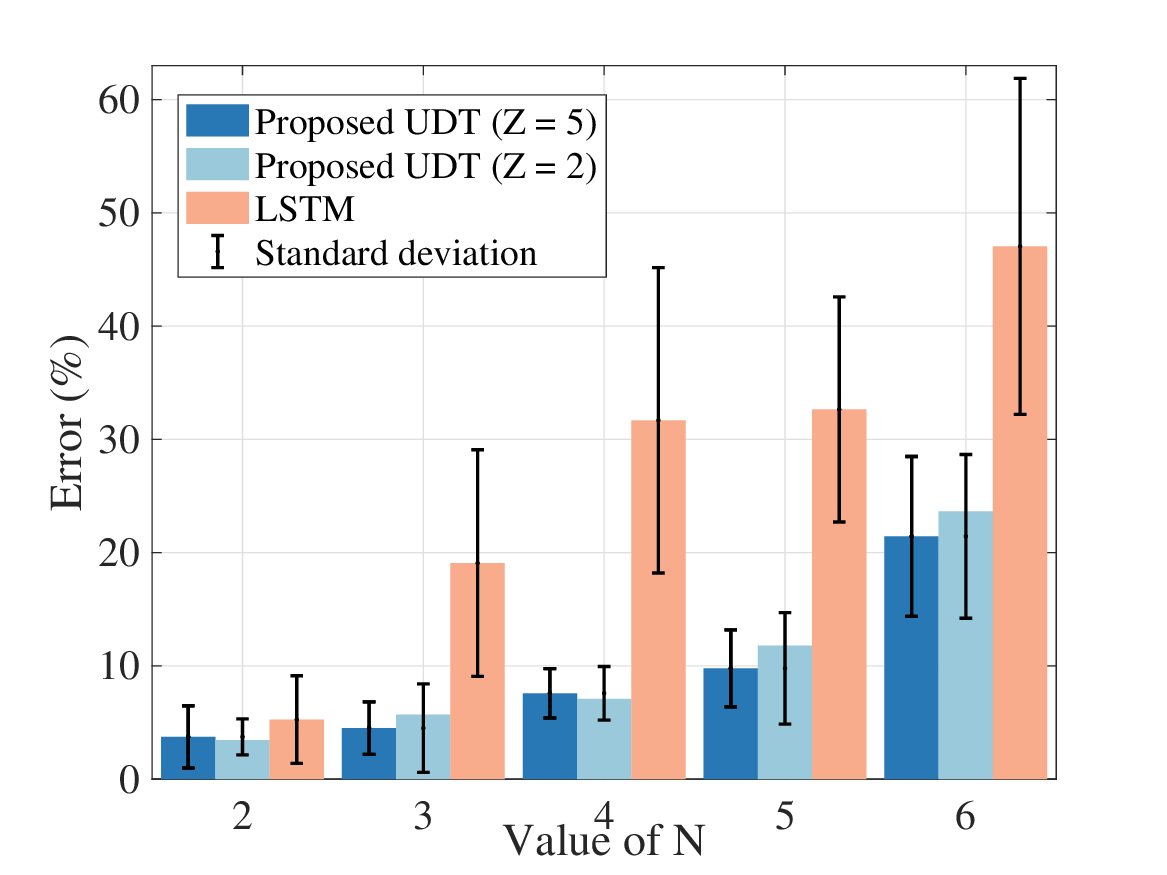}
        \caption{Performance comparison between the UDT and the LSTM-based prediction across three time intervals.}\label{fig2}
    \end{figure}

Next, we compare the performance of two designed UDTs with latent features of different dimensions, i.e.~$Z$, with that of a data-driven method (labeled as ``LSTM'') in capturing the non-stationary dynamics of the uplink data rate across multiple time intervals. Specifically, three time intervals are considered, each of which corresponds to a unique state transition matrix. As shown in Fig.~\ref{fig2}, we plot the error of the estimated state transition matrix for the three schemes versus the number of states in the Markov chain, i.e.,~$N$. The value of each bar represents the average error over 20 independent simulation runs. We can observe that the designed UDT outperforms the LSTM-based method in all scenarios since the LSTM-based method simply makes deterministic predictions rather than capturing the underlying state transition probabilities. Additionally, given the UDT, the error increases with~$N$ since a larger value of~$N$ results in more parameters to approximate.

    \begin{figure}[t]
        \centering
        \begin{subfigure}[b]{0.45\textwidth}
            \includegraphics[width=\textwidth]{./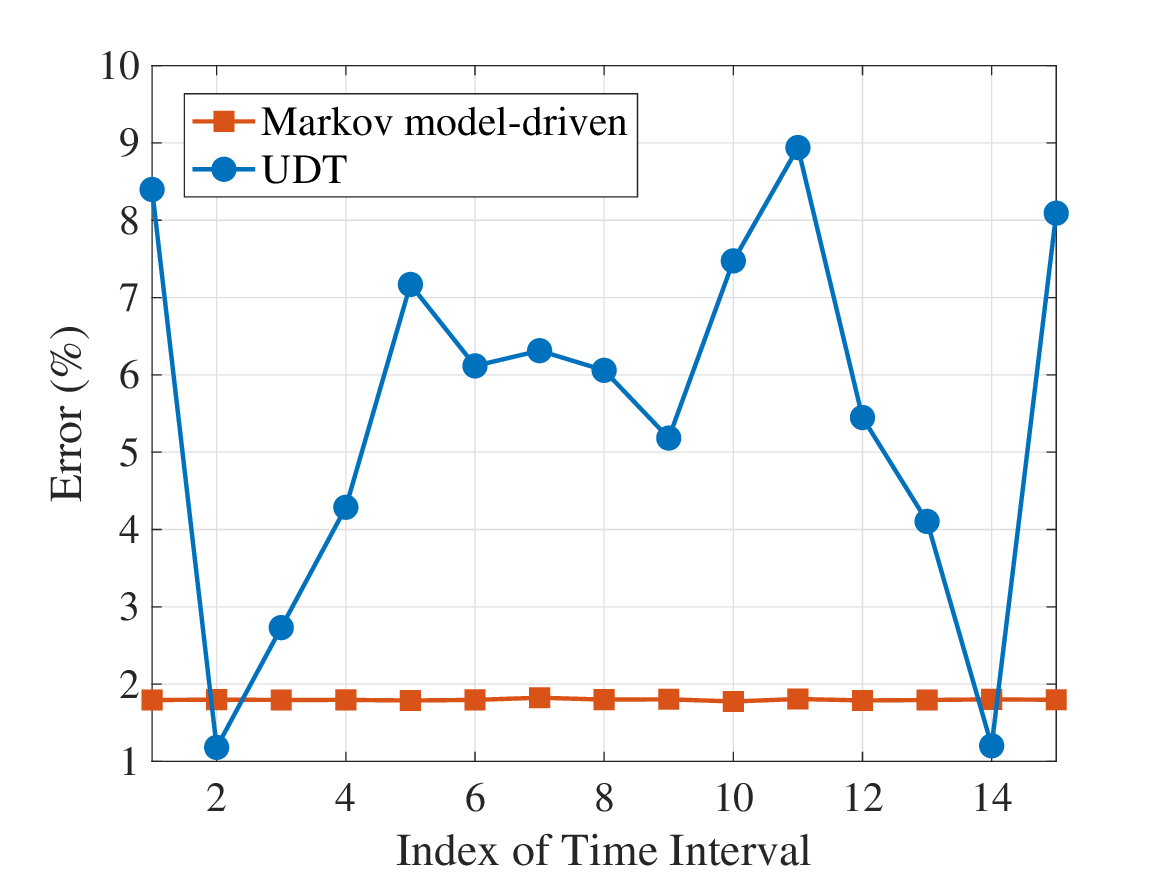}
            \caption{Stationary uplink data rate.}
            \label{fig56:sub1}
        \end{subfigure}
        \quad
        \begin{subfigure}[b]{0.45\textwidth}
            \includegraphics[width=\textwidth]{./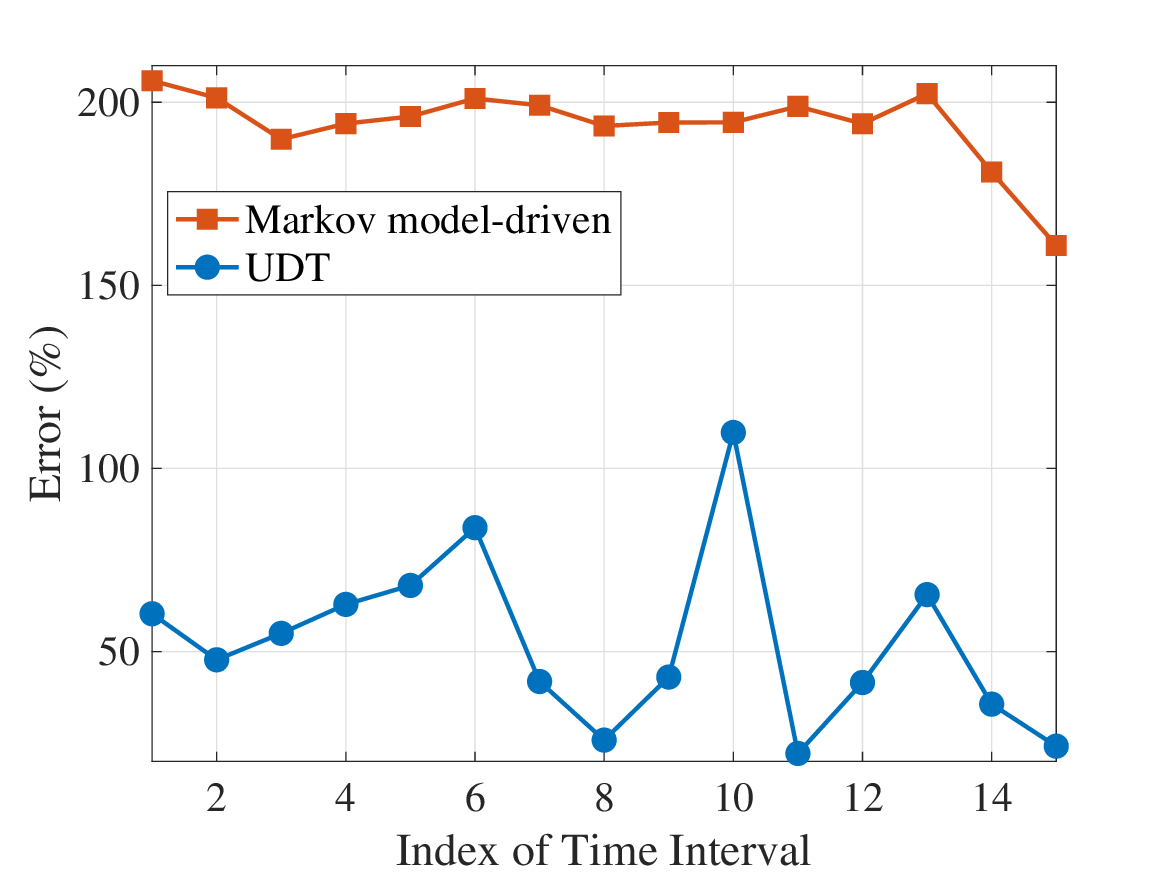}
            \caption{Non-stationary uplink data rate.}
            \label{fig:sub2}
        \end{subfigure}
        \caption{Performance comparison between the UDT and the Markov model-driven approach across 15 time intervals.}
        \label{fig56}
    \end{figure}

In Fig.~\ref{fig56}, we compare the performance of the designed UDT with that of a Markov model-driven approach (labeled as ``Markov model-driven'') across 15 time intervals, in the scenarios with stationary and non-stationery uplink data rates. For this model-driven approach, we pre-define a mathematical model, i.e., $N$~state Markov chain, and estimate its state transition matrix as model parameters according to the statistics of collected data. In~Fig.~\ref{fig56}(a), when the time-varying uplink data rate is stationary across all the time intervals, i.e., the state transition matrix is constant, the model-driven approach slightly outperforms the designed UDT in terms of error. This is because the model-driven approach operates on a known~\emph{a priori} mathematical model rather than using a data model to approximate the mathematical model. However, in~Fig.~\ref{fig56}(b), when the state transition matrix underlying the Markov chain varies across time intervals, the UDT-based approach significantly outperforms the model-driven approach since the designed UDT can capture the time-varying dynamics across time intervals using a data model with DNNs.

\subsection{MBRL for 3D Map Management}

In this subsection, we evaluate the performance of the proposed MBRL scheme for 3D map management using the UDT. 

    \begin{figure}[t]
        \centering
        \includegraphics[width=0.45\textwidth]{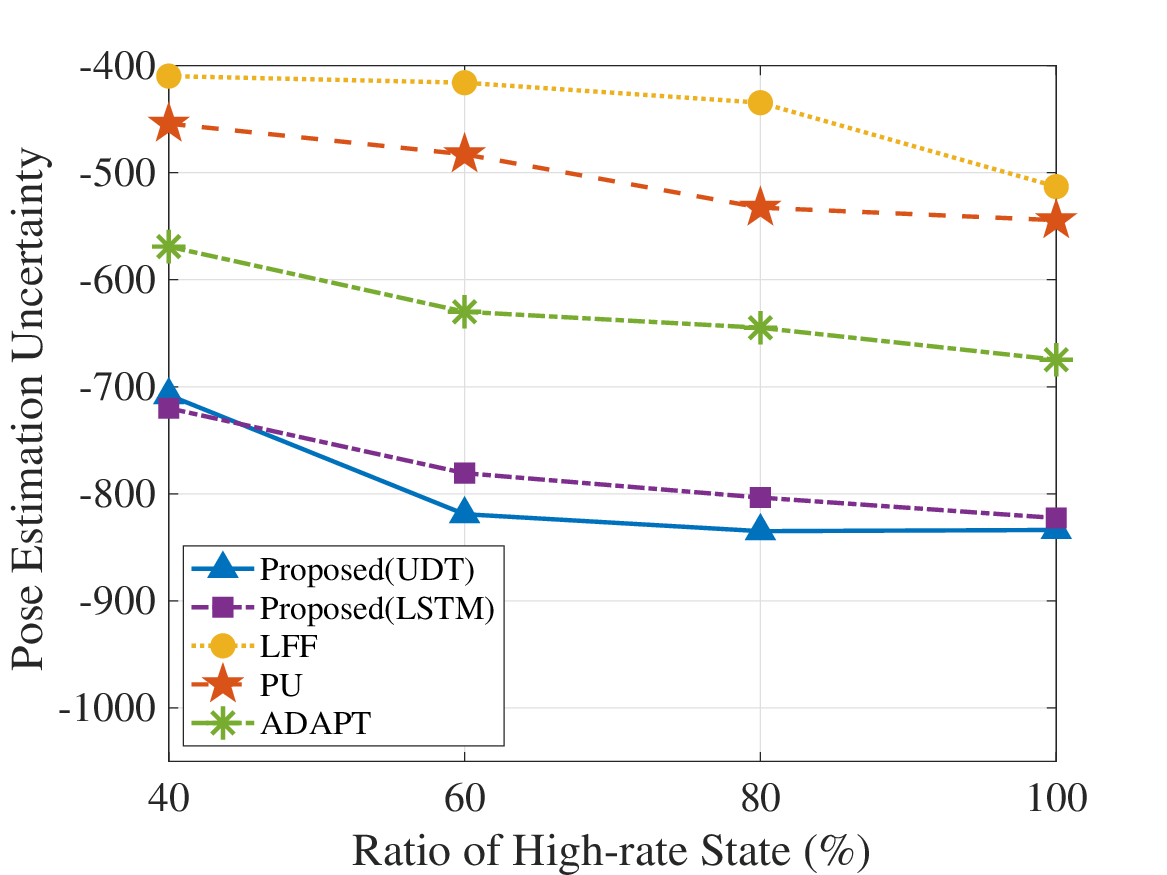}
        \caption{Pose estimation uncertainty versus the ratio of high-rate state when $N=2$.}\label{fig7}
    \end{figure}

In Figs.~\ref{fig7} and~\ref{fig8}, we compare the performance of the proposed MBRL scheme in two cases, one using the UDT (labeled as ``Proposed (UDT)'') and the other using LSTM to generate artificial experiences (labeled as ``Proposed (LSTM)''), with that of the three benchmark 3D map management schemes in one time interval. The uplink data rate follows a two-state Markov chain, with each point representing the average over 15 independent simulation runs. In Fig.~\ref{fig7}, by setting different transition matrices of the two-state Markov chain, we change the ratio of the time slots corresponding to the high-rate state to all time slots. We observe that the proposed MBRL scheme for 3D map management achieves a lower pose estimation uncertainty than the three benchmark schemes. This is because the MBRL scheme, with the help of the UDT, can learn a policy that prioritizes the camera frames for 3D map management by considering their long-term impacts, as opposed to the myopic 3D map management adopted by the three benchmark schemes. This allows the proposed scheme to cope with the dynamics of the uplink data rate and the user's pose. In addition, we can observe that the ``Proposed (UDT)'' and the ``Proposed (LSTM)'' schemes have similar performance when $N=2$, but the former outperforms the latter when $N=4$. This is because the error of LSTM significantly decreases with value of~$N$ as shown in Fig.~\ref{fig2}, thereby reducing the accuracy of the generated artificial experiences.

    \begin{figure}[t]
        \centering
        \includegraphics[width=0.45\textwidth]{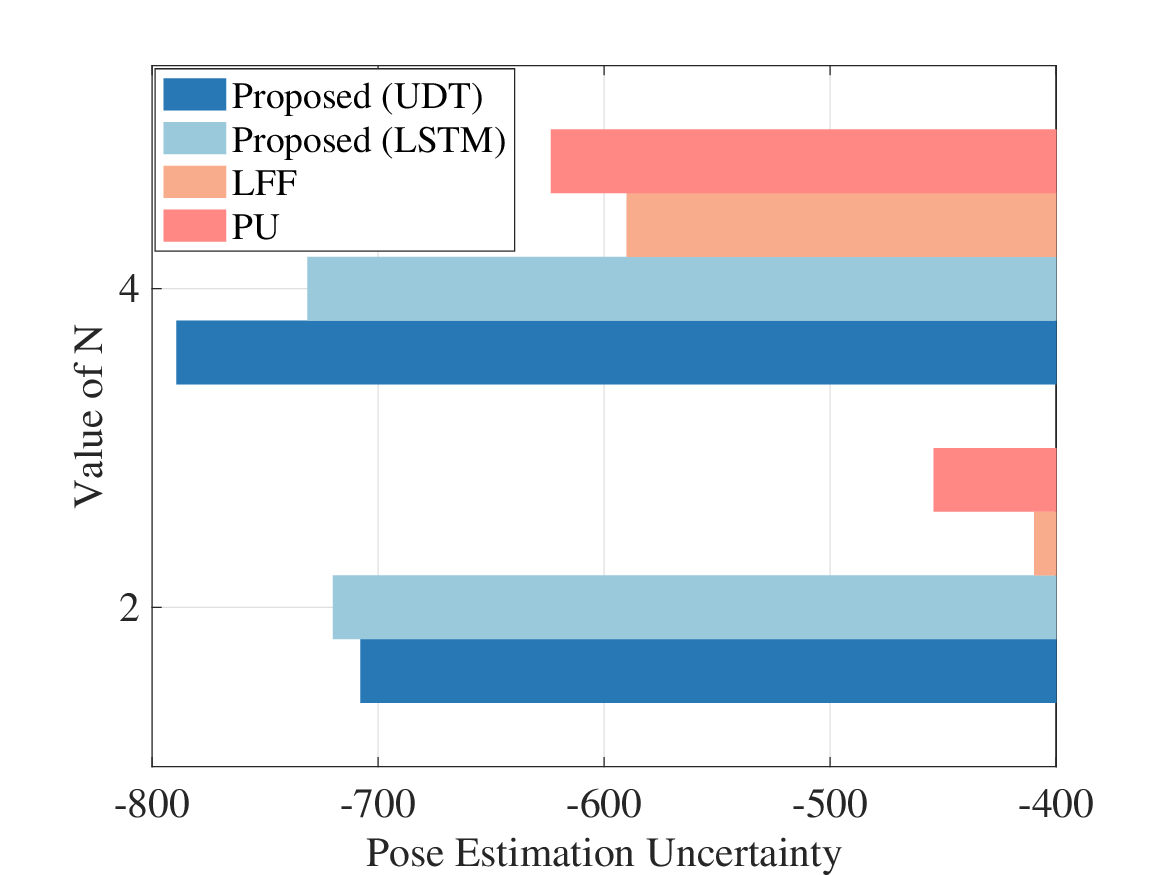}
        \caption{Performance comparison between MBRL and conventional 3D map management schemes when $N=2$ and $N=4$.}\label{fig8}
    \end{figure}

In Fig.~\ref{fig8}, we evaluate the impact of the time-varying uplink data rate on the performance of 3D map management. Specifically, we examine two scenarios, in which the expected uplink data rate is the same while the state transition matrix for the Markov chain has 2 and 4 states, respectively. Compared with the scenario with 4 states, the variance of the uplink data rate across time slots is larger in the scenario with 2 states. We can observe that the performance advantage of the proposed MBRL scheme expands with the value of~$N$. This is because dealing with a large variance in the uplink data rate requires the policy to take into account the long-term impact of 3D map management decision in each time slot on subsequent time slots, which cannot be achieved by the benchmark schemes.

    \begin{figure}[t]
        \centering
        \includegraphics[width=0.45\textwidth]{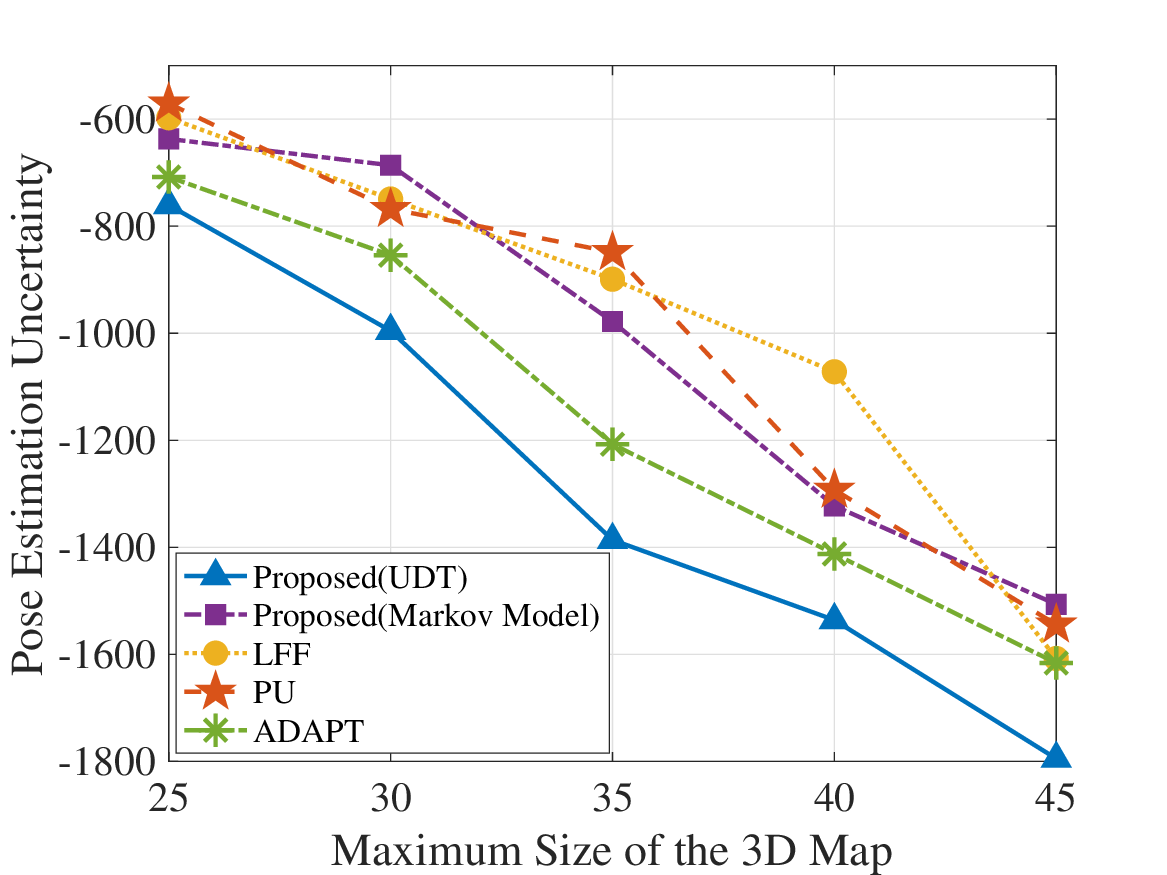}
        \caption{Pose estimation uncertainty versus the 3D map size.}\label{fig9}
    \end{figure}

In Fig.~\ref{fig9}, we compare the performance of the proposed MBRL scheme with that of the three benchmark schemes versus the maximum 3D map size,~$V^\text{max}$, which ranges from $25$ to $45$ camera frames. For each map size, three time intervals are simulated, and the uplink data rate is non-stationary across the three time intervals. In addition to ``LFF'', ``PU'', and ``ADAPT'' schemes, we use the proposed MBRL scheme that employs a Markov model (instead of the UDT) to capture~$x_{t}$ as a benchmark scheme, which is labeled as ``Proposed (Markov model)''. We have two observations from Fig.~\ref{fig9}. First, given different sizes of 3D maps, the proposed MBRL using the UDT scheme can select an appropriate set of camera frames for uploading and updating the 3D map based on their long-term impacts. Thus, it outperforms the ``LFF'', the ``PU'', and the ``ADAPT'' schemes, which make decisions myopically. For example, the ``ADAPT'' scheme solves a myopic uncertainty minimization problem for each time slot rather than a sequential decision-making problem considering the long-term impact of 3D map management decisions on subsequent time slots. Second, the proposed MBRL scheme using the UDT outperforms the ``Proposed (Markov model)'' scheme in terms of pose estimation uncertainty as well. This is because using a fixed Markov model to capture the non-stationary uplink data rate can be inaccurate, as shown in Fig.~\ref{fig56}(b), thereby significantly hampering the capability of the MBRL scheme in learning the optimal 3D map management policy. In contrast, the UDT can cope with the non-stationary uplink data rate and facilitate the proposed MBRL scheme.

\section{Conclusion and Future Work}

In this paper, we have designed a UDT-based 3D map management scheme to facilitate edge-assisted device pose tracking for MAR applications. The UDT established for the MAR device can extract the latent features from the time-varying uplink data rate, thereby supporting the emulation of 3D map management. By using the collected and generated data from the UDT, our MBRL scheme learns a 3D map management policy to prioritizing camera frames for uploading to update the 3D map, which minimizes pose estimation uncertainty. Numerical results have demonstrated the effectiveness of the UDT in capturing the dynamics of the uplink data rate and the adaptivity of the MBRL scheme in coping with the variations in the uplink data rate and the device pose. The designed network dynamics-aware scheme establishes a foundation for customizing UDTs to optimize 3D map management policies based on the distinct network conditions of MAR devices. In the future, we will target efficient resource reservation at an edge server to support 3D map management for multiple MAR devices, considering not only the uplink data rate but also the impacts of different device pose variation patterns on the computing, data storage, and communication resource demands.

\appendix

\subsection{Proof of Lemma~\ref{lemma1}}\label{appendix:lemma1} 
\begin{proof}
    Given a 3D map~$\mathcal{G} = (\mathcal{V}, \mathcal{E})$, the pose estimation uncertainty~$u(\mathcal{G})$ can be calculated according to~\eqref{eq6}, given by~\cite{petersen2008matrix}:
        \begin{equation}\label{eqa1}
            \begin{aligned} 
                 u(\mathcal{G}) & = - \log \left( \det ( \hat{\bm{L}}(\mathcal{G}) \otimes \boldsymbol{\Pi} ) \right)\\
                & = - \log \left( \det ( \hat{\bm{L}}(\mathcal{G}))^{6} \det(\boldsymbol{\Pi})^{|\mathcal{V}|-1} \right),
            \end{aligned} 
        \end{equation}
    where $|\mathcal{V}|$ denotes the number of camera frames in 3D map~$\mathcal{G}$, and the dimension of~$\hat{\bm{L}}(\mathcal{G})$ is $(|\mathcal{V}|-1) \times (|\mathcal{V}|-1)$.

    According to the Kirchhoff's Matrix-Tree Theorem~\cite{godsil2001algebraic}, we can calculate the value of~$\det (\hat{\bm{L}}(\mathcal{G}) )$ for graph~$\mathcal{G}$ based on its weighted number of a spanning tree, i.e.,~the weighted sum of all edges in a tree that connect all nodes in the graph without forming any cycles. Since adding a new edge to a connected graph always increases the weighted number of a spanning tree if the resulting graph remains connected~\cite{khosoussi2014novel}, the following inequality holds:
       \begin{equation}\label{eqa3}
            \kappa(\mathcal{G}) < \kappa(\mathcal{G} \cup \{f\}),\,\, f \notin \mathcal{V},   
       \end{equation}
    where~$f$ denotes a newly added node corresponding to a newly uploaded camera frame, which creates at least one new edge in the 3D map~$\mathcal{G}$. According to the Kirchhoff's Matrix-Tree Theorem and~\eqref{eqa3}, we can derive the following inequality: 
        \begin{equation}\label{eqa4}
            \begin{aligned} 
                 u(\mathcal{G}) & = - \log \left( \kappa(\mathcal{G})^{6} \det(\boldsymbol{\Pi})^{|\mathcal{V}|-1} \right)\\
                                & > - \log \left( \kappa(\mathcal{G} \cup \{f\})^{6} \det(\boldsymbol{\Pi})^{|\mathcal{V}|} \right)\\
                                & = u(\mathcal{G} \cup \{f\}),
            \end{aligned} 
        \end{equation}
    where~$\det(\boldsymbol{\Pi}) \ge 1$ when cameras are high-resolution and high-accuracy and can provide extensive and reliable information for device pose tracking (e.g., an identity matrix is adopted in~\cite{chen2023adaptslam}). Therefore, Lemma~\ref{lemma1} is proved based on~\eqref{eqa4}. 
    \end{proof}


\bibliography{ref}

\bibliographystyle{IEEEtran}

\end{document}